\DeclareMathSymbol{\shortminus}{\mathbin}{AMSa}{"39}
\newtheorem{claim}{}[section]
\newtheorem{theorem}[claim]{Theorem}
\newtheorem{corollary}[claim]{Corollary}
\theoremstyle{remark}
\renewenvironment{proof}{\noindent{\it Proof. \hskip0pt}}
                      {$\square$\par\medskip}
\begin{document}
\baselineskip 6.0 truemm
\parindent 1.5 true pc

\newcommand\xx{{\text{\sf X}}}
\newcommand\lan{\langle}
\newcommand\ran{\rangle}
\newcommand\tr{{\text{\rm Tr}}\,}
\newcommand\ot{\otimes}
\newcommand\ol{\overline}
\newcommand\join{\vee}
\newcommand\meet{\wedge}
\renewcommand\ker{{\text{\rm Ker}}\,}
\newcommand\image{{\text{\rm Im}}\,}
\newcommand\id{{\text{\rm id}}}
\newcommand\tp{{\text{\rm tp}}}
\newcommand\pr{\prime}
\newcommand\e{\epsilon}
\newcommand\la{\lambda}
\newcommand\inte{{\text{\rm int}}\,}
\newcommand\ttt{{\text{\rm t}}}
\newcommand\spa{{\text{\rm span}}\,}
\newcommand\conv{{\text{\rm conv}}\,}
\newcommand\rank{\ {\text{\rm rank of}}\ }
\newcommand\re{{\text{\rm Re}}\,}
\newcommand\ppt{\mathbb T}
\newcommand\rk{{\text{\rm rank}}\,}
\newcommand\SN{{\text{\rm SN}}\,}
\newcommand\SR{{\text{\rm SR}}\,}
\newcommand\HA{{\mathcal H}_A}
\newcommand\HB{{\mathcal H}_B}
\newcommand\HC{{\mathcal H}_C}
\newcommand\CI{{\mathcal I}}
\newcommand{\bra}[1]{\langle{#1}|}
\newcommand{\ket}[1]{|{#1}\rangle}
\newcommand\cl{\mathcal}
\newcommand\idd{{\text{\rm id}}}
\newcommand\OMAX{{\text{\rm OMAX}}}
\newcommand\OMIN{{\text{\rm OMIN}}}
\newcommand\diag{{\text{\rm Diag}}\,}
\newcommand\calI{{\mathcal I}}
\newcommand\bfi{{\bf i}}
\newcommand\bfj{{\bf j}}
\newcommand\bfk{{\bf k}}
\newcommand\bfl{{\bf l}}
\newcommand\bfp{{\bf p}}
\newcommand\bfq{{\bf q}}
\newcommand\bfzero{{\bf 0}}
\newcommand\bfone{{\bf 1}}
\newcommand\sa{{\rm sa}}
\newcommand\ph{{\rm ph}}
\newcommand\phase{{\rm ph}}
\newcommand\res{{\text{\rm res}}}
\newcommand{\algname}[1]{{\sc #1}}
\newcommand{\Setminus}{\setminus\hskip-0.2truecm\setminus\,}
\newcommand\calv{{\mathcal V}}
\newcommand\calg{{\mathcal G}}
\newcommand\calt{{\mathcal T}}
\newcommand\calvnR{{\mathcal V}_n^{\mathbb R}}
\newcommand\D{{\mathcal D}}
\newcommand\C{{\mathcal C}}
\newcommand\aaa{\alpha}
\newcommand\bbb{\beta}
\newcommand\ccc{\gamma}
\newcommand\xxxx{\par\bigskip {\color{red}========================================} \bigskip\par}
\newcommand\tefrac{\textstyle\frac}
\newcommand\xxx{\xx^\sigma}
\newcommand\E{{\mathcal E}}
\newcommand\W{{\mathcal W}}
\newcommand\X{{\mathcal X}}
\newcommand\variable{{\mathcal X}}
\newcommand\LX{{\mathcal L}_{\text{\sf X}}}
\newcommand\ext{{\rm Ext}\,}
\newcommand\GHZ{{\sf GHZ_d}}
\newcommand\pp{\phantom{-}}
\newcommand\CE{{\mathcal C}{\mathcal E}}

\newcommand{\exta}{{\mathcal E}^1}
\newcommand{\extb}{{\mathcal E}^2}
\newcommand{\extc}{{\mathcal E}^3}

\title{Criteria for partial entanglement of three qubit states arising from distributive rules}

\author{Kyung Hoon Han and Seung-Hyeok Kye}
\address{Kyung Hoon Han, Department of Data Science, The University of Suwon, Gyeonggi-do 445-743, Korea}
\email{kyunghoon.han at gmail.com}
\address{Seung-Hyeok Kye, Department of Mathematics and Institute of Mathematics, Seoul National University, Seoul 151-742, Korea}
\email{kye at snu.ac.kr}
\thanks{Both KHH and SHK were partially supported by NRF-2020R1A2C1A01004587, Korea}

\subjclass{81P15, 15A30, 46L05, 46L07}

\keywords{partial entanglement/separability, distributive rules, three qubit states, X-shaped}

\begin{abstract}
It is known that the partial entanglement/separability violates distributive rules with respect to
the operations of taking convex hull and intersection. In this note, we give criteria for
three qubit partially entangled states arising from distributive rules,
together with the corresponding witnesses.
The criteria will be given in terms of diagonal and anti-diagonal entries. They
actually characterize those partial entanglement completely
when all the entries are zero except for diagonal and anti-diagonal entries.
Important states like Greenberger-Horne-Zeilinger diagonal states fall down in this class.
\end{abstract}

\maketitle

\section{Introduction}

The notion of entanglement from quantum physics is now one of the
most important resources in current quantum information and quantum
computation theory. Recall that a state is called (fully) separable
if it is a convex sum of pure product states, and entangled if it is
not separable. In multi-partite systems, the notion of entanglement
depends on the partitions of subsystems to get various kinds of
partial entanglement. In the tri-partite system with subsystems $A$,
$B$ and $C$, we have three kinds of bi-separability, that is,
$A$-$BC$, $B$-$CA$ and $C$-$AB$ separability according to the
bi-partitions of the subsystems. We call those {\sl basic
bi-separability}. We will denote by $\aaa$, $\bbb$ and $\ccc$ the
convex cones consisting of all $A$-$BC$, $B$-$CA$ and $C$-$AB$
separable un-normalized states, respectively. In the three qubit
cases, the convex cones $\aaa,\bbb$ and $\ccc$ are sitting in the
real vector space of all three qubit self-adjoint matrices, which is a $64$ dimensional real vector space.

Many authors have considered the intersections and convex hulls for basic bi-separable states, which will be denoted by
$\meet$ and $\join$, respectively. Note that the convex hull of two convex cones coincides with the nonnegative sum.
After it was shown in \cite{bdmsst} that a three qubit state in $\aaa\meet\bbb\meet\ccc$
need not to be fully separable as a tri-partite state, several authors have considered intersections and convex sums of the convex cones
$\aaa$, $\bbb$ and $\ccc$. See \cite{dur-cirac,dur-cirac-tarrach} for intersections
of two of them, and \cite{seevinck-uffink,acin} for convex hull of them.
See also \cite{sz2011,sz2012,sz2015,sz2018,han_kye_bisep_exam,han_kye_pe,Szalay-2019} for further development
in more general contexts.
We recall that a tri-partite state is called genuinely entangled if it does not belong to
$\aaa\join\bbb\join\ccc$.

Very recently, the authors and Szalay \cite{han_kye_szalay} considered the lattice, dented by ${\mathcal L}$, generated by
three convex cones $\aaa,\bbb$ and $\ccc$ with respect to the two operations of convex hull and
intersection, and showed that this lattice violates the distributive rules. More precisely,
it was shown that the following inequalities
\begin{align}
(\aaa\meet\bbb)\join(\aaa\meet\ccc)&\le \aaa\meet(\bbb\join\ccc),\label{dist-ineq}\\
\aaa\join(\bbb\meet\ccc)&\le (\aaa\join\bbb)\meet(\aaa\join\ccc)\label{dist-ineq-2}
\end{align}
are strict.
We refer to \cite{birkhoff,{freese}} for general theory of lattices.

For a convex cone $C$ in a real vector space $V$ with a bi-linear pairing $\lan\, \cdot\, ,\, \cdot \, \ran$, the dual cone $C^\circ$ is defined by
the convex cone consisting of all $y\in V$ satisfying $\lan x,y\ran\ge 0$ for every $x\in C$.
For given two self-adjoint matrices $x=[x_{ij}]$ and $y=[y_{ij}]$, we use the bi-linear pairing
$$
\lan x,y\ran=\tr(xy^\ttt)=\sum_{ij}x_{ij}y_{ij},
$$
where $y^\ttt$ denotes the transpose of $y$.
We recall that matrices in the dual cones play the roles of {\sl witnesses}.
For example, we have $\varrho\notin (\aaa\meet\bbb)\join(\aaa\meet\ccc)$ if and only if
there exists
$$
W\in [(\aaa\meet\bbb)\join(\aaa\meet\ccc)]^\circ
=(\aaa^\circ\join\bbb^\circ)\meet(\aaa^\circ\join\ccc^\circ)
$$
such that $\lan W,\varrho\ran<0$. See \cite{han_kye_pe} for the details.
We note that the following
inequalities
\begin{align}
\aaa^\circ\join(\bbb^\circ\meet\ccc^\circ)&\le (\aaa^\circ\join\bbb^\circ)\meet(\aaa^\circ\join\ccc^\circ),\label{dist-ineq-2-dual}\\
(\aaa^\circ\meet\bbb^\circ)\join(\aaa^\circ\meet\ccc^\circ)&\le \aaa^\circ\meet(\bbb^\circ\join\ccc^\circ)\label{dist-ineq-dual}
\end{align}
are also strict, by duality.

The main purpose of this paper is to give criteria for the convex cones arising in the above inequalities
(\ref{dist-ineq}), (\ref{dist-ineq-2}), (\ref{dist-ineq-2-dual}) and (\ref{dist-ineq-dual}) in the three qubit cases.
Criteria will be given in terms of diagonal and anti-diagonal entries.
Criteria of such kinds have been considered for the convex cone $\aaa\join\bbb\join\ccc$ in
\cite{{gao},{guhne10},{Rafsanjani}} to get sufficient conditions for genuine entanglement.
Such criteria for $\aaa,\bbb,\ccc$ and $\aaa^\circ,\bbb^\circ,\ccc^\circ$ also can be found in
\cite[Proposition 5.2]{han_kye_optimal} and \cite[Theorem 6.2]{han_kye_tri}, respectively,
(see also Propositions 3.1 and 3.3 of \cite{han_kye_optimal}) from which
we also get criteria for intersections like $\aaa\meet\bbb$ and $\aaa^\circ\meet\bbb^\circ$.
Convex sums like $\aaa\join\bbb$ and $\aaa^\circ\join\bbb^\circ$ have been considered in \cite{han_kye_pe}.
Finally, we found criteria for the convex cones of the type $\aaa\join(\bbb\meet\ccc)$
in \cite{han_kye_szalay} in the context of distributive rules. Therefore,
we will concentrate on the convex cones of the following types
\begin{equation}\label{list}
(\aaa\meet\bbb)\join(\aaa\meet\ccc),\qquad
\aaa^\circ\join(\bbb^\circ\meet\ccc^\circ),\qquad
(\aaa^\circ\meet\bbb^\circ)\join(\aaa^\circ\meet\ccc^\circ).
\end{equation}
The whole convex cones of partially separable states we are considering can be drawn in the following diagram with the inclusion relations.
\begin{equation}\label{diag_con}
\xymatrix{
& \alpha \join \beta \join \gamma && \\
\alpha \join \beta \ar[ur] & \gamma \join \alpha \ar[u] & \beta \join \gamma \ar[ul] & \\
(\alpha \join \beta) \meet (\gamma \join \alpha) \ar[u] \ar[ur]
    & (\alpha \join \beta) \meet (\beta \join \gamma) \ar[ul] \ar[ur]
    & (\gamma \join \alpha) \meet (\beta \join \gamma) \ar[ul] \ar[u] & \\
\alpha \join (\beta \meet \gamma) \ar[u] & \beta \join (\gamma \meet \alpha) \ar[u] & \gamma \join (\alpha \meet \beta) \ar[u]
&
\\
\alpha \ar[u] & \beta \ar[u] & \gamma \ar[u] &  \\
\alpha \meet (\beta \join \gamma) \ar[u] \ar@{>}[uuurr] \ar@{>}[uuur] 
     &\beta \meet (\gamma \join \alpha) \ar[u] \ar@{>}[uuul] \ar@{>}[uuur] 
     & \gamma \meet (\alpha \join \beta) \ar[u] \ar@{>}[uuull] \ar@{>}[uuul] 
     &
\\
(\alpha \meet \beta) \join (\gamma \meet \alpha) \ar[u] \ar@{>}[uuur] \ar@{>}[uuurr] 
    & (\alpha \meet \beta) \join (\beta \meet \gamma) \ar[u] \ar@{>}[uuul] \ar@{>}[uuur] 
    & (\gamma \meet \alpha) \join (\beta \meet \gamma) \ar[u] \ar@{>}[uuull] \ar@{>}[uuul] 
    & \\
\alpha \meet \beta \ar[u] \ar[ur] & \gamma \meet \alpha \ar[ul] \ar[ur] & \beta \meet \gamma \ar[ul] \ar[u] & \\
& \alpha \meet \beta \meet \gamma \ar[ul] \ar[u] \ar[ur] &
}
\end{equation}
We also have the similar diagram for the dual cones consisting of witnesses.

We recall that a matrix is called $\xx$-{\sl shaped} \cite{yu07} if all the entries are zero except for diagonal and anti-diagonal entries.
$\xx$-shaped states will be called $\xx$-states. Many important states like GHZ diagonal states belong to this class.
If a three qubit state is (partially) separable, then its $\xx$-part is also (partially) separable,
and so any separability criteria for \xx-states
will give rise to a necessary criteria for general three qubit states. In this
paper, we will give necessary conditions for three qubit states
(respectively, witnesses) to belong to convex cones
listed in (\ref{list}) (respectively, the dual of (\ref{list})) in
terms of diagonal and anti-diagonal entries. Such conditions are
also sufficient when states/witnesses are $\xx$-shaped. Especially,
we will find lattice identities for the first and third convex cones
in (\ref{list}) when states/witnesses are $\xx$-shaped. See
Corollaries \ref{identity_P} and \ref{identity_Q}.

After we collect known results for criteria in the next section, we
will give criteria for the convex cones of the type
$(\aaa\meet\bbb)\join(\aaa\meet\ccc)$ in Section 3, where we will
also show that the lattice ${\mathcal L}$ is not complemented.
Criteria for the types $\aaa^\circ\join(\bbb^\circ\meet\ccc^\circ)$
and $(\aaa^\circ\meet\bbb^\circ)\join(\aaa^\circ\meet\ccc^\circ)$
will be given in Sections 4 and 5, respectively.
In Section 6, we restrict our attention on GHZ diagonal states,
to exhibit all GHZ diagonal states belonging to convex cones considered in the paper.

The authors are grateful to the referee for bringing their attention to GHZ diagonal states.

\section{summary of the known criteria}

States and witnesses in the tensor product $M_2\otimes M_2\otimes M_2$ of $2\times 2$ matrices may be written
as an $8\times 8$ matrices with respect to the lexicographic order of of indices for subsystems. Then $\xx$-shaped states/witnesses
are of the form
$$
\xx(a,b,z)= \left(
\begin{matrix}
a_1 &&&&&&& z_1\\
& a_2 &&&&& z_2 & \\
&& a_3 &&& z_3 &&\\
&&& a_4&z_4 &&&\\
&&& \bar z_4& b_4&&&\\
&& \bar z_3 &&& b_3 &&\\
& \bar z_2 &&&&& b_2 &\\
\bar z_1 &&&&&&& b_1
\end{matrix}
\right),
$$
for $a,b\in\mathbb R^4$ and $z\in\mathbb C^4$. It is well-known that
every GHZ diagonal state \cite{GHZ} is in this form, and an $\xx$-state  $\varrho=\xx(a,b,z)$
is GHZ diagonal if and only if $a=b$ and $z\in\mathbb R^4$.
See \cite{han_kye_GHZ}.

We denote by $\varrho_\xx$ the $\xx$-part of a state $\varrho$, in the obvious sense. If we denote by $\mathcal L$
the lattice generated by $\aaa$, $\bbb$ and $\ccc$ in the three qubit case, then we have
\begin{equation}\label{x-part}
\varrho\in\sigma \ \Longrightarrow \varrho_\xx\in\sigma,
\end{equation}
for every $\sigma\in\mathcal L$. In fact, it is easily seen that if (\ref{x-part}) holds for $\sigma$ and $\tau$ then
it also holds for $\sigma\meet\tau$ and $\sigma\join\tau$. We have already seen that
the generators $\aaa,\bbb$ and $\ccc$ of ${\mathcal L}$ satisfy (\ref{x-part})
in \cite[Proposition 2.2]{han_kye_pe}, and so it follows that (\ref{x-part}) holds for every $\sigma\in\mathcal L$.
If we denote by ${\mathcal L}^\circ$ the lattice generated by $\aaa^\circ$, $\bbb^\circ$ and $\ccc^\circ$, then we also have
\begin{equation}\label{x-part-w}
W\in\sigma \ \Longrightarrow W_\xx\in\sigma,
\end{equation}
for every $\sigma\in{\mathcal L}^\circ$, by the identity $\lan\varrho_\xx,W\ran = \lan\varrho, W_\xx\ran$.

By a {\sl pair} $\{i,j\}$, we always mean throughout this paper an unordered set
with {\sl two} elements among $1,2,3,4$, that is, we assume that $i\neq j$.
For a given three qubit $\xx$-shaped state $\varrho=\xx(a,b,z)$, we
consider the inequalities
\begin{center}
\framebox{
\parbox[t][2.5cm]{14.80cm}{
\addvspace{0.1cm} \centering
$$
\begin{array}{ll}
S_1[i,j]:&\quad \min\{\sqrt{a_ib_i},\sqrt{a_jb_j}\}\ge \max\{|z_i|,|z_j|\},\\
S_2[i,j]:&\quad
\min\left\{\sqrt{a_ib_i}+\sqrt{a_jb_j},\sqrt{a_k b_k}+\sqrt{a_\ell b_\ell}\right\}
   \ge\max\left\{|z_i|+|z_j|,|z_k|+|z_\ell|\right\},\\
S_3 :&\quad \textstyle{\sum_{j\neq i}\sqrt{a_jb_j}\ge |z_i|},\quad i=1,2,3,4,\\
S_4[i,j|k,\ell]:&\quad
\min\left\{\sqrt{a_ib_i}+\sqrt{a_jb_j},\sqrt{a_k b_k}+\sqrt{a_\ell b_\ell}\right\}
   \ge\max\left\{|z_i|+|z_j|,|z_k|+|z_\ell|\right\}.
\end{array}
$$
}}
\end{center}\medskip
The inequalities $S_1$ and $S_2$ are defined for a pair $\{i,j\}$, where the pair
$\{k,\ell\}$ appearing in the inequality $S_2[i,j]$ are chosen so that $\{i,j,k,\ell\}=\{1,2,3,4\}$.
On the other hand, the inequality $S_4$ is defined for arbitrary two pairs $\{i,j\}$ and $\{k,\ell\}$.
We note that the inequality $S_4[i,j|k,\ell]$ holds
automatically for any {\sf X}-states $\varrho=\xx(a,b,z)$ when $\{i,j\}=\{k,\ell\}$.
If $\{i,j\}\cap\{k,\ell\}=\emptyset$, we note that
the three inequalities $S_2[i,j]$, $S_2[k,\ell]$ and $S_4[i,j|k,\ell]$ are same.

We summarize the results for three qubit $\xx$-shaped states $\varrho=\xx(a,b,z)$ as follows:
First of all, it was shown in \cite[Proposition 5.2]{han_kye_optimal} and
\cite[Proposition 3.1]{han_kye_pe} that
\begin{itemize}
\item
$\varrho\in\aaa$ if and only if $S_1[1,4]$ and $S_1[2,3]$ hold;
\item
$\varrho\in\bbb$ if and only if $S_1[1,3]$ and $S_1[2,4]$ hold;
\item
$\varrho\in\ccc$ if and only if $S_1[1,2]$ and $S_1[3,4]$ hold.
\end{itemize}
Analogous results for multi-qubit states are also known in \cite{han_kye_optimal}.
As for the convex hulls of them, the authors showed in
\cite[Theorem 5.5]{han_kye_pe} the following:
\begin{itemize}
\item
$\varrho\in\bbb\join\ccc$ if and only if $S_2[1,4]$ (equivalently $S_2[2,3]$) holds;
\item
$\varrho\in\ccc\join\aaa$ if and only if $S_2[1,3]$ (equivalently $S_2[2,4]$) holds;
\item
$\varrho\in\aaa\join\bbb$ if and only if $S_2[1,2]$ (equivalently $S_2[3,4]$) holds.
\end{itemize}
On the other hand, it has been known earlier \cite{{guhne10},{gao},{Rafsanjani},{han_kye_optimal}}
that
\begin{itemize}
\item
$\varrho\in\aaa\join\bbb\join\ccc$
if and only if $S_3$ holds.
\end{itemize}
See also \cite[Proposition 4.5]{han_kye_pe}.
Finally, the authors and Szalay showed
in \cite[Theorem 2.1]{han_kye_szalay} that the following
\begin{itemize}
\item
$\varrho\in\aaa\join(\bbb\meet\ccc)$ if and only if $S_4[i,j|k,\ell]$ holds
whenever $\{i,j\}$, $\{k,\ell\}$ are two of $\{1,2\}$, $\{1,3\}$, $\{2,4\}$, $\{3,4\}$;
\item
$\varrho\in\bbb\join(\ccc\meet\aaa)$ if and only if $S_4[i,j|k,\ell]$ holds
whenever $\{i,j\}$, $\{k,\ell\}$ are two of $\{1,2\}$, $\{1,4\}$, $\{2,3\}$, $\{3,4\}$;
\item
$\varrho\in\ccc\join(\aaa\meet\bbb)$ if and only if $S_4[i,j|k,\ell]$ holds
whenever $\{i,j\}$, $\{k,\ell\}$ are two of $\{1,3\}$, $\{1,4\}$, $\{2,3\}$, $\{2,4\}$
\end{itemize}
hold for $\xx$-states $\varrho=\xx(a,b,z)$ in the contexts of distributive rules.

For an {\sf X}-shaped
self-adjoint matrix $W=\xx(s,t,u)$ with $s_i,t_i\ge 0$ and
$u\in\mathbb C^4$, we also consider the following inequalities:
\begin{center}
\framebox{
\parbox[t][1.9cm]{10.00cm}{
\addvspace{0.1cm} \centering
$$
\begin{array}{ll}
W_1[i,j]: &\quad \sqrt{s_it_i}+\sqrt{s_jt_j}\ge |u_i|+|u_j|,\\
W_2[i,j]: &\quad \sum_{k\neq j}\sqrt{s_kt_k}\ge |u_i|,\quad \sum_{k\neq i}\sqrt{s_kt_k}\ge |u_j|,\qquad \qquad \qquad \qquad\qquad\\
W_3 :&\quad \sum_{i=1}^4\sqrt{s_it_i}\ge\sum_{i=1}^4|u_i|
\end{array}
$$
}}
\end{center}\medskip
for a pair $\{i,j\}$.
Three qubit $\xx$-shaped witnesses $W=\xx(s,t,u)$ which are dual of the basic bi-separability
have been considered in \cite[Theorem 6.2]{han_kye_tri} and
\cite[Proposition 3.3]{han_kye_pe} as follows:
\begin{itemize}
\item
$W\in\aaa^\circ$ if and only if $W_1[1,4]$ and $W_1[2,3]$ hold;
\item
$W\in\bbb^\circ$ if and only if $W_1[1,3]$ and $W_1[2,4]$ hold;
\item
$W\in\ccc^\circ$ if and only if $W_1[1,2]$ and $W_1[3,4]$ hold.
\end{itemize}
On the other hand, the joins of them have been characterized in
\cite[Theorem 5.2]{han_kye_pe}:
\begin{itemize}
\item
$W\in\bbb^\circ\join\ccc^\circ$ if and only if $W_2[1,4]$, $W_2[2,3]$ and $W_3$ hold;
\item
$W\in\ccc^\circ\join\aaa^\circ$ if and only if $W_2[1,3]$, $W_2[2,4]$ and $W_3$ hold;
\item
$W\in\aaa^\circ\join\bbb^\circ$ if and only if $W_2[1,2]$, $W_2[3,4]$ and $W_3$ hold.
\end{itemize}
Finally, it was shown in
\cite[Theorem 5.5]{han_kye_optimal} and
\cite[Proposition 4.2]{han_kye_pe} that
\begin{itemize}
\item
$W\in\aaa^\circ\join\bbb^\circ\join\ccc^\circ$  if and only if $W_3$ holds.
\end{itemize}

\section{Criteria for $(\aaa\meet\bbb)\join(\aaa\meet\ccc)$}

In this section, we look for criteria for the convex cones of the type $(\aaa\meet\bbb)\join(\aaa\meet\ccc)$. To do this,
we first consider the following inequalities;
\begin{equation}\label{dist1}
\begin{aligned}
(x\meet y)\join (x\meet z)
&\le x\meet (x\join (y\meet z))\meet(y\join (z\meet x))\meet (z\join (y\meet x))\\
&= x\meet (y\join (z\meet x))\meet (z\join (y\meet x))\\
&\le x\meet (y\join (z\meet x))\\
&\le x\meet (y\join z),
\end{aligned}
\end{equation}
which hold in general in an arbitrary lattice.
From the first inequality of (\ref{dist1}),
we have natural necessary conditions. We show that they are sufficient
for $\xx$-states.

\begin{theorem}\label{dist111}
For a three qubit state $\varrho$ with the $\xx$-part $\xx(a,b,z)$, we have the following:
\begin{enumerate}
\item[(i)]
if $\varrho\in (\aaa\meet\bbb)\join(\aaa\meet\ccc)$, then the
inequalities $S_1[1,4]$, $S_1[2,3]$ and $S_4[i,j|k,\ell]$ hold for
any different pairs $\{i,j\}$ and $\{k,\ell\}$;
\item[(ii)]
if $\varrho\in (\bbb\meet\ccc)\join(\bbb\meet\aaa)$, then the
inequalities $S_1[1,3]$, $S_1[2,4]$ and $S_4[i,j|k,\ell]$ hold for
any different pairs $\{i,j\}$ and $\{k,\ell\}$;
\item[(iii)]
if $\varrho\in (\ccc\meet\aaa)\join(\ccc\meet\beta)$, then the
inequalities $S_1[1,2]$, $S_1[3,4]$ and $S_4[i,j|k,\ell]$ hold for
any different pairs $\{i,j\}$ and $\{k,\ell\}$.
\end{enumerate}
If $\varrho=\xx(a,b,z)$ then the converses also hold.
\end{theorem}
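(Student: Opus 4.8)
The plan is to prove statement (i) only; statements (ii) and (iii) follow by the obvious symmetry permuting the roles of $\aaa,\bbb,\ccc$ (equivalently, permuting the three ways of pairing up $\{1,2,3,4\}$ into two pairs). The necessity direction in (i) is essentially free: from the first inequality in (\ref{dist1}) we have
$$
(\aaa\meet\bbb)\join(\aaa\meet\ccc)\le \aaa\meet\big(\bbb\join(\ccc\meet\aaa)\big)\meet\big(\ccc\join(\bbb\meet\aaa)\big)\le\aaa\meet\big(\aaa\join(\bbb\meet\ccc)\big).
$$
So if $\varrho\in(\aaa\meet\bbb)\join(\aaa\meet\ccc)$, then $\varrho\in\aaa$, which by the known criterion for $\aaa$ (quoted in Section 2) gives $S_1[1,4]$ and $S_1[2,3]$ for $\varrho_\xx$; and $\varrho\in\aaa\join(\bbb\meet\ccc)$, which by \cite[Theorem 2.1]{han_kye_szalay} gives $S_4[i,j|k,\ell]$ for every pair of the four pairs $\{1,2\},\{1,3\},\{2,4\},\{3,4\}$. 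One still needs $S_4[i,j|k,\ell]$ for \emph{all} different pairs, e.g.\ those involving $\{1,4\}$ or $\{2,3\}$; but here I would observe that $S_1[1,4]$ and $S_1[2,3]$ force $\sqrt{a_ib_i}\ge\max\{|z_1|,|z_4|,|z_2|,|z_3|\}$... — more carefully, $S_1[1,4]$ gives $\min\{\sqrt{a_1b_1},\sqrt{a_4b_4}\}\ge\max\{|z_1|,|z_4|\}$ and similarly for $\{2,3\}$, and a short case check shows these two inequalities together imply $S_4[i,j|k,\ell]$ whenever $\{i,j\}$ or $\{k,\ell\}$ equals $\{1,4\}$ or $\{2,3\}$ (because then one side of the $S_4$ inequality is built from a ``controlling'' pair). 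Thus the four remaining instances of $S_4$ reduce to the already-known four, and necessity is complete.

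For sufficiency I would work entirely with the $\xx$-state $\varrho=\xx(a,b,z)$, and the goal is to exhibit it as a nonnegative sum $\varrho=\varrho_1+\varrho_2$ with $\varrho_1\in\aaa\meet\bbb$ and $\varrho_2\in\aaa\meet\ccc$. The natural guess is to split only the anti-diagonal: write $z_r=z_r^{(1)}+z_r^{(2)}$ and $a=a^{(1)}+a^{(2)}$, $b=b^{(1)}+b^{(2)}$, so that $\varrho_1=\xx(a^{(1)},b^{(1)},z^{(1)})$ satisfies $S_1[1,4],S_1[2,3],S_1[1,3],S_1[2,4]$ (the conditions for $\aaa\meet\bbb$) and $\varrho_2=\xx(a^{(2)},b^{(2)},z^{(2)})$ satisfies $S_1[1,4],S_1[2,3],S_1[1,2],S_1[3,4]$ (the conditions for $\aaa\meet\ccc$). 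By homogeneity and a standard reduction (replacing $a_r,b_r$ by $\sqrt{a_rb_r}$, i.e.\ assuming $a=b$, which is harmless because only the products $a_rb_r$ enter all the inequalities $S_1,S_4$) the problem becomes purely combinatorial: given nonnegative reals $p_r:=\sqrt{a_rb_r}$ and complex $z_r$ with $|z_r|\le p_r$ (this is $S_1$ for all six... actually only the four needed pairs, but $S_1[1,4],S_1[2,3]$ plus the $S_4$'s will be what we have) and all the $S_4[i,j|k,\ell]$, find a decomposition $p_r=p_r^{(1)}+p_r^{(2)}$, $z_r=z_r^{(1)}+z_r^{(2)}$ with $|z_r^{(m)}|\le p_r^{(m)}$ and the pair-sum inequalities $|z_i^{(1)}|+|z_j^{(1)}|\le$ (the min over the appropriate two pair-sums of the $p^{(1)}$'s) for the pairs relevant to $\bbb$, and similarly for $\varrho_2$ and $\ccc$. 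This is a transportation-type feasibility problem; I expect it to be solvable by a direct construction, choosing $z_r^{(1)}=\lambda_r z_r$, $z_r^{(2)}=(1-\lambda_r)z_r$ for suitable $\lambda_r\in[0,1]$ and then $p_r^{(1)},p_r^{(2)}$ slightly larger than the moduli to absorb the pair constraints, with the global $S_4$ inequalities providing exactly the ``budget'' needed.

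The main obstacle is this last combinatorial feasibility step: one must verify that the hypothesis set $\{S_1[1,4],S_1[2,3]\}\cup\{S_4[i,j|k,\ell]:\text{all different pairs}\}$ is \emph{exactly} strong enough to guarantee a valid split, with no slack wasted and nothing missing. I would approach it by treating the three pairings $\{1,4\}|\{2,3\}$, $\{1,3\}|\{2,4\}$, $\{1,2\}|\{3,4\}$ on equal footing, noting that $S_1[1,4]$ and $S_1[2,3]$ are the ``common'' constraints shared by both $\aaa\meet\bbb$ and $\aaa\meet\ccc$, and that the remaining constraints for $\bbb$ (the $\{1,3\}|\{2,4\}$ pair-sums) and for $\ccc$ (the $\{1,2\}|\{3,4\}$ pair-sums) are ``complementary'' in a way that the $S_4$ hypotheses for the mixed pairs (like $\{1,3\}|\{1,2\}$, $\{1,3\}|\{3,4\}$, etc.) control. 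A clean way to organize the construction is to first handle the extreme case where some $p_r=|z_r|$ (so that coordinate $r$ cannot be split and must go entirely to one summand), reducing to a smaller problem, and otherwise to use the strict inequalities to build an explicit barycentric decomposition; the bookkeeping of which $S_4$ instance licenses which inequality is where the care is needed, but no single estimate should be deep. Finally, for the converse I would also record the dual statement about witnesses via the pairing $\lan\varrho_\xx,W\ran=\lan\varrho,W_\xx\ran$, though that is not needed for this theorem.
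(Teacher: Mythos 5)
Your necessity argument contains a false step. You discard the middle term of your own chain and try to recover the nine instances of $S_4[i,j|k,\ell]$ in which one of the pairs is $\{1,4\}$ or $\{2,3\}$ from $S_1[1,4]$ and $S_1[2,3]$ alone; this implication does not hold. Take $a=b=(1,10,10,1)$ and $z=(0,10,10,0)$. Then $S_1[1,4]$ and $S_1[2,3]$ hold (so $\varrho=\xx(a,b,z)\in\aaa$), and all six instances of $S_4$ for pairs among $\{1,2\},\{1,3\},\{2,4\},\{3,4\}$ hold as well, yet $S_4[1,4|2,3]$ fails because $\sqrt{a_1b_1}+\sqrt{a_4b_4}=2<20=|z_2|+|z_3|$ (and $S_4[1,4|1,2]$ fails similarly). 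The point is that $S_1[1,4]$ controls $|z_1|,|z_4|$ by $\sqrt{a_1b_1},\sqrt{a_4b_4}$ but says nothing about bounding $|z_k|+|z_\ell|$ for the \emph{other} pair by $\sqrt{a_1b_1}+\sqrt{a_4b_4}$; indeed $S_4[1,4|2,3]$ is precisely the extra inequality separating $\aaa$ from $\aaa\meet(\bbb\join\ccc)$ (see {\sc Table 1}), so it cannot follow from the criteria for $\aaa$. The repair is already in your first display: keep all three factors and use
$(\aaa\meet\bbb)\join(\aaa\meet\ccc)\le\aaa\meet(\aaa\join(\bbb\meet\ccc))\meet(\bbb\join(\ccc\meet\aaa))\meet(\ccc\join(\aaa\meet\bbb))$.
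By \cite[Theorem 2.1]{han_kye_szalay} each of the three joins supplies the $S_4$ inequalities for pairs drawn from a different quadruple of pairs, and the three quadruples together exhaust all fifteen combinations of two distinct pairs. This is exactly how the paper obtains necessity.

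The sufficiency direction is a plan, not a proof. A direct decomposition $\varrho=\varrho_1+\varrho_2$ with $\varrho_1\in\aaa\meet\bbb$ and $\varrho_2\in\aaa\meet\ccc$ would of course suffice, but the entire content of the converse is concentrated in the feasibility of that split, and you explicitly leave it unverified (``I expect it to be solvable\dots'', ``the main obstacle is this last combinatorial feasibility step''). No candidate $\lambda_r$, no choice of $p_r^{(1)},p_r^{(2)}$, and no verification that the $S_4$ budget suffices are given, so nothing is established. The paper takes a different, dual route: it fixes an arbitrary $\xx$-shaped witness $W=\xx(s,t,u)$ in $(\aaa^\circ\join\bbb^\circ)\meet(\aaa^\circ\join\ccc^\circ)$ --- which by (\ref{x-part-w}) is the only kind one needs to test against, and which is characterized by the $W_2$ and $W_3$ inequalities of Section 2 --- and proves $\lan W,\varrho\ran\ge 0$ by a case analysis on which $S_1$ conditions for $\bbb$ and $\ccc$ fail, using the $S_4$ hypotheses to drive the estimates. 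If you wish to keep the primal decomposition approach, you must actually construct the split and check each required $S_1$ inequality for both summands; as written, the converse remains unproved.
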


\begin{proof}
We prove the last one. The necessity follows from (\ref{dist1}), more precisely from the inequality
$$
(\ccc\meet\aaa)\join(\ccc\meet\beta)\le
\ccc\meet (\aaa\join(\bbb\meet\ccc))\meet(\bbb\join(\ccc\meet\aaa))\meet (\ccc\join(\aaa\meet\bbb)),
$$
and the criteria \cite[Theorem 2.1]{han_kye_szalay} for convex cones in the right side.
We prove the converse
when $\varrho=\xx(a,b,z)$. To do this, we suppose that $\varrho$ satisfies $S_1[1,2]$, $S_1[3,4]$
and $S_4[i,j|k,\ell]$ for different pairs $\{i,j\}$ and $\{k,\ell\}$. In order to show that $\varrho\in
(\ccc\meet\aaa)\join(\ccc\meet\bbb)$, we suppose
that $W=\xx(s,t,u)\in (\ccc^\circ\join\aaa^\circ)\meet(\ccc^\circ\join\bbb^\circ)$ and
will prove $\lan W,\varrho\ran\ge 0$. So, we assume that $W$ satisfies $W_2[1,3]$,
$W_2[2,4]$ $W_2[1,4]$, $W_2[2,3]$, and $W_3$.

If $\varrho\in\aaa$ or $\varrho\in\bbb$, then there is nothing to prove since $\varrho\in\ccc$ by
$S_1[1,2]$ and $S_1[3,4]$. Therefore, we may assume that
$\varrho\notin\aaa$ and $\varrho\notin\bbb$. By $\varrho\notin\aaa$, we may assume $|z_4|>\sqrt{a_1b_1}$ without loss of generality.
By $\varrho\notin\bbb$, we have one of the following:
$$
|z_3|>\sqrt{a_1b_1} ,\qquad |z_1|>\sqrt{a_3b_3}, \qquad
|z_2|>\sqrt{a_4b_4}, \qquad  |z_4|>\sqrt{a_2b_2}.
$$
The second implies $|z_1|+|z_4|>\sqrt{a_1b_1}+\sqrt{a_3b_3}$, which violates $S_4[1,4|1,3]$.
The third also violates $S_4[2,4|1,4]$. Therefore, we have two cases:
 Either $|z_4|, |z_3| \ge \sqrt{a_1b_1}$ or $|z_4| \ge \sqrt{a_1b_1}, \sqrt{a_2b_2}$.

We consider the case: $|z_4|, |z_3| \ge \sqrt{a_1b_1}$.
By $S_4[1,i|3,4]$ with $i=2,3,4$, we have
$$
\min\{ \sqrt{a_2b_2}, \sqrt{a_3b_3}, \sqrt{a_4b_4} \} \ge |z_4| + (|z_3|-\sqrt{a_1b_1}).
$$
Therefore, we have
$$
\begin{aligned}
&\sqrt{s_2t_2}\sqrt{a_2b_2} + \sqrt{s_3t_3}\sqrt{a_3b_3} + \sqrt{s_4t_4}\sqrt{a_4b_4} -|u_4||z_4| \\
&\phantom{XX} \ge (\sqrt{s_2t_2} + \sqrt{s_3t_3} + \sqrt{s_4t_4}) \min \{ \sqrt{a_2b_2}, \sqrt{a_3b_3}, \sqrt{a_4b_4}) -|u_4||z_4|\\
&\phantom{XX} \ge  \left( \sqrt{s_2t_2} + \sqrt{s_3t_3} + \sqrt{s_4t_4} -|u_4| \right) |z_4| +
     (\sqrt{s_2t_2}+\sqrt{s_3t_3}+\sqrt{s_4t_4})(|z_3| -\sqrt{a_1b_1}) \\
&\phantom{XX} \ge  \left( \sqrt{s_2t_2} + \sqrt{s_3t_3} + \sqrt{s_4t_4} -|u_4| \right) \sqrt{a_1b_1}
     + (\sqrt{s_2t_2}+\sqrt{s_3t_3}+\sqrt{s_4t_4})(|z_3| -\sqrt{a_1b_1}) \\
&\phantom{XX}=-|u_4|\sqrt{a_1b_1}+(\sqrt{s_2t_2}+\sqrt{s_3t_3}+\sqrt{s_4t_4})|z_3|,
\end{aligned}
$$
where the last inequality follows from $W_2[4,1]$.
On the other hand, we also have
$$
\sqrt{s_1t_1}\sqrt{a_1b_1}-|u_1||z_1|-|u_2||z_2|-|u_3||z_3|
\ge
\sqrt{s_1t_1}\sqrt{a_1b_1}-|u_1|\sqrt{a_1b_1}-|u_2|\sqrt{a_1b_1}-|u_3||z_3|,
$$
by $S_1[1,2]$. Summing up the above two inequalities, we have
$$
\begin{aligned}
{\tfrac 12}\lan W, \varrho \ran
& = {1 \over 2}\textstyle\sum_{i=1}^4 [s_i a_i + t_i b_i + 2 {\rm Re} (u_i z_i)]\\
& \ge \textstyle\sum_{i=1}^4 (\sqrt{s_i t_i} \sqrt{a_i b_i} - |u_i| |z_i|)\\
& \ge(\sqrt{s_1t_1}-|u_1|-|u_2|-|u_4|)\sqrt{a_1b_1}
  +(\sqrt{s_2t_2}+\sqrt{s_3t_3}+\sqrt{s_4t_4}-|u_3|)|z_3|\\
&\ge (\sqrt{s_1t_1}-|u_1|-|u_2|-|u_4|)\sqrt{a_1b_1}
  +(\sqrt{s_2t_2}+\sqrt{s_3t_3}+\sqrt{s_4t_4}-|u_3|)\sqrt{a_1b_1}\\
&=(\textstyle{\sum_{i=1}^4\sqrt{s_it_i}-\sum_{j=1}^4|u_j|})\sqrt{a_1b_1},
\end{aligned}
$$
by $W_2[3,1]$. This is nonnegative by $W_3$.

It remains to consider the case: $|z_4| \ge \sqrt{a_1b_1}, \sqrt{a_2b_2}$.
We use $S_4[1,2|i,4]$ with $i=1,2,3$, to get
$\sqrt{a_1b_1} + \sqrt{a_2b_2} - |z_4| \ge  \max\{ |z_1|, |z_2|, |z_3| \}$, and so
$$
\begin{aligned}
(\textstyle\sum_{i=1}^4\sqrt{s_it_i}-|u_4|)(\sqrt{a_1b_1} + \sqrt{a_2b_2} - |z_4| )
&\ge  (\textstyle\sum_{i=1}^3|u_i|) \max\{ |z_1|, |z_2|, |z_3| \}\\
&\ge \textstyle\sum_{i=1}^3|u_i||z_i|,
\end{aligned}
$$
by $W_3$. Therefore, we have
$$
\begin{aligned}
 \textstyle\sum_{i=1}^4 &(\sqrt{s_i t_i} \sqrt{a_i b_i} - |u_i| |z_i|) \\
& = \textstyle\sum_{i=1}^2\sqrt{s_it_i}\sqrt{a_ib_i}
    + \left(\sqrt{s_3t_3}\sqrt{a_3b_3} + \sqrt{s_4t_4}\sqrt{a_4b_4} -|u_4||z_4|\right) - \textstyle\sum_{i=1}^3 |u_i||z_i| \\
& \ge  \textstyle\sum_{i=1}^2\sqrt{s_it_i}\sqrt{a_ib_i}+ (\sqrt{s_3t_3} + \sqrt{s_4t_4} -|u_4|)|z_4| \\
& \phantom{XXXXXXXXXXX} - (\textstyle\sum_{i=1}^4 \sqrt{s_it_i} - |u_4| ) (\sqrt{a_1b_1} + \sqrt{a_2b_2} - |z_4|),
\end{aligned}
$$
by $S_1[3,4]$. We continue as follows:
$$
\begin{aligned}
& =  ( - \textstyle\sum_{i \ne 1} \sqrt{s_it_i} + |u_4| ) \sqrt{a_1b_1} + ( - \textstyle\sum_{i \ne 2} \sqrt{s_it_i} + |u_4| ) \sqrt{a_2b_2}\\
& \phantom{XXXXXXXXXXX} + ( \sqrt{s_1t_1} + \sqrt{s_2t_2} + 2\sqrt{s_3t_3} + 2\sqrt{s_4t_4} - 2|u_4| )|z_4| \\
& \ge  ( - \textstyle\sum_{i \ne 1} \sqrt{s_it_i} + |u_4| ) |z_4|
+ ( - \textstyle\sum_{i \ne 2} \sqrt{s_it_i} + |u_4| ) |z_4|\\
& \phantom{XXXXXXXXXXX} + ( \sqrt{s_1t_1} + \sqrt{s_2t_2} + 2\sqrt{s_3t_3} + 2\sqrt{s_4t_4} - 2|u_4| )|z_4|=0, \\
\end{aligned}
$$
by the inequalities $W_2[4,1]$ and $W_2[4,2]$.
\end{proof}

In order to summarize the role of inequality $S_4[i,j|k,\ell]$, we consider the following six convex cones
$$
\begin{aligned}
\sigma_1&=\aaa\join\bbb,\\
\sigma_2&=(\aaa\join\bbb)\meet(\aaa\join\ccc),\\
\sigma_3&=\aaa\join(\bbb\meet\ccc),\\
\sigma_4&=\aaa,\\
\sigma_5&=\aaa\meet(\bbb\join\ccc),\\
\sigma_6&=(\aaa\meet\bbb)\join(\aaa\meet\ccc),
\end{aligned}
$$
which make the chain $\sigma_1\supset\sigma_2\supset\sigma_3\supset\sigma_4\supset\sigma_5\supset\sigma_6$
of inclusions. We provide Table 1 to see which inequalities $S_4[i,j|k,\ell]$ we need to determine $\sigma_i$ for $i=1,2,3,4,5,6$.
We recall again that both $S_2[i,j]$ and $S_2[k,\ell]$
coincide with $S_4[i,j|k,\ell]$ when $\{i,j\}\cap\{k,\ell\}=\emptyset$.

\begin{table}
\begin{center}
\begin{tabular}{c||c|c|c|c|c|c|}
&$\{1,2\}$ &$\{1,3\}$ &$\{1,4\}$ &$\{2,3\}$ &$\{2,4\}$ &$\{3,4\}$\\
\hline\hline
$\{1,2\}$ && $\sigma_3$ &$\sigma_6$ &$\sigma_6$  &$\sigma_3$ &$\sigma_1$\\
\hline
$\{1,3\}$ &  &&$\sigma_6$ &$\sigma_6$  &$\sigma_2$ &$\sigma_3$\\
\hline
$\{1,4\}$ && &&$\sigma_5$  &$\sigma_6$ &$\sigma_6$\\
\hline
$\{2,3\}$ &&& &&$\sigma_6$ &$\sigma_6$\\
\hline
$\{2,4\}$ &&&& &&$\sigma_3$\\
\hline
$\{3,4\}$ &&&&& &\\
\hline
\end{tabular}
\caption{\small This table shows that $\sigma_1=\aaa\join\bbb$ is determined by $S_4[1,2|3,4]$.
The convex cone $\sigma_2=(\aaa\join\bbb)\meet(\aaa\join\ccc)$
is determined by an extra inequality $S_4[1,3|2,4]$, and $\sigma_3=\aaa\join(\bbb\meet\ccc)$ is determined by the six inequalities
labeled by $\sigma_1$, $\sigma_2$ and $\sigma_3$.
The convex cone $\sigma_4=\aaa$ is determined by another inequalities $S_1[1,4]$ and $S_1[2,3]$ which imply
these six inequalities. One more inequality $S_4[1,4|2,3]$ is required in order to determine $\sigma_5=\aaa\meet(\bbb\join\ccc)$,
and all the inequalities are required to determine $\sigma_6=(\aaa\meet\bbb)\join(\aaa\meet\ccc)$ as well as $S_1[1,4]$ and $S_1[2,3]$.
We note that $S_1[i,j]$ may be recovered if we allow $S_4[i,i|j,j]$.
}
\end{center}
\end{table}

We denote by $\xx$ the real vector space of all three qubit $\xx$-shaped self-adjoint matrices, and define
$$
\sigma_X:=\sigma\cap\xx=\{\varrho_\xx:\varrho\in\sigma\},
$$
for each $\sigma\in{\mathcal L}$, where the last identity follows from (\ref{x-part}).
 Then, it is easily seen that
\begin{equation}\label{homox}
(\sigma\meet\tau)_\xx=\sigma_\xx\meet\tau_\xx,\qquad
(\sigma\join\tau)_\xx=\sigma_\xx\join\tau_\xx,
\end{equation}
and so, we see that
$$
{\mathcal L}_\xx=\{ \sigma_\xx:\sigma\in{\mathcal L} \}
$$
is the lattice generated by $\aaa_\xx$, $\bbb_\xx$ and $\ccc_\xx$, which are
$16$ affine dimensional convex bodies sitting in the real vector space $\xx$.
We write
$$
P:=[\aaa\join(\bbb\meet\ccc)]\meet[\bbb\join(\ccc\meet\aaa)]\meet[\ccc\join(\aaa\meet\bbb)]
$$
in the lattice ${\mathcal L}$. Then we have
$$
P_\xx =[\aaa_\xx\join(\bbb_\xx\meet\ccc_\xx)]\meet[\bbb_\xx\join(\ccc_\xx\meet\aaa_\xx)]\meet[\ccc_\xx\join(\aaa_\xx\meet\bbb_\xx)]
$$
in ${\mathcal L}_\xx$.
Theorem \ref{dist111} shows that the following lattice theoretic identities hold among generators
$\aaa_\xx$, $\bbb_\xx$ and $\ccc_\xx$ of the lattice ${\mathcal L}_\xx$.

\begin{corollary}\label{identity_P}
We have the following identities
$$
\begin{aligned}
(\aaa_\xx\meet\bbb_\xx)\join(\aaa_\xx\meet\ccc_\xx)&=\aaa_\xx\meet P_\xx,\\
(\bbb_\xx\meet\ccc_\xx)\join(\bbb_\xx\meet\aaa_\xx)&=\bbb_\xx\meet P_\xx,\\
(\ccc_\xx\meet\aaa_\xx)\join(\ccc_\xx\meet\bbb_\xx)&=\ccc_\xx\meet P_\xx.
\end{aligned}
$$
\end{corollary}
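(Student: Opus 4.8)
\noindent\textit{Proof sketch.}\
The plan is to deduce the three identities directly from Theorem~\ref{dist111} together with the set-level descriptions of the cones involved, using the homomorphism property (\ref{homox}). Since every element of ${\mathcal L}_\xx$ is the $\xx$-part of an element of ${\mathcal L}$ and, by (\ref{homox}), the operations $\meet$ and $\join$ in ${\mathcal L}_\xx$ amount to intersection and nonnegative sum of the underlying sets of $\xx$-states, it is enough to check that the two sides of each identity describe the same set of $\xx$-states $\varrho=\xx(a,b,z)$. I would prove the third identity; the first two then follow by the same argument from parts (i) and (ii) of Theorem~\ref{dist111}, or, more economically, by applying the cyclic relabeling $A\to B\to C\to A$ of the subsystems, which induces a lattice automorphism of ${\mathcal L}_\xx$ permuting $\aaa_\xx,\bbb_\xx,\ccc_\xx$ and fixing $P_\xx$.

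The inclusion $(\ccc_\xx\meet\aaa_\xx)\join(\ccc_\xx\meet\bbb_\xx)\le\ccc_\xx\meet P_\xx$ I would obtain from the general lattice inequality: setting $x=\ccc$, $y=\aaa$, $z=\bbb$ in the first inequality of (\ref{dist1}) and absorbing $\ccc\meet[\ccc\join(\aaa\meet\bbb)]=\ccc$ gives $(\ccc\meet\aaa)\join(\ccc\meet\bbb)\le\ccc\meet P$ in ${\mathcal L}$, and (\ref{homox}) transports this to ${\mathcal L}_\xx$. The substance is the reverse inclusion. For that I would first translate membership in $\ccc_\xx\meet P_\xx$ into the inequalities of Section~2: by the known criteria, $\varrho\in\ccc_\xx$ iff $S_1[1,2]$ and $S_1[3,4]$ hold, and by (\ref{homox}) together with \cite[Theorem~2.1]{han_kye_szalay} as recalled in Section~2, $\varrho\in P_\xx$ iff $S_4[i,j|k,\ell]$ holds for every pair of distinct pairs occurring in the six-element family attached to at least one of $\aaa\join(\bbb\meet\ccc)$, $\bbb\join(\ccc\meet\aaa)$, $\ccc\join(\aaa\meet\bbb)$.

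The key bookkeeping step, and the one I expect to be the main obstacle, is to verify that the union of these three six-element families exhausts all fifteen unordered pairs of distinct pairs drawn from $\{1,2\}$, $\{1,3\}$, $\{1,4\}$, $\{2,3\}$, $\{2,4\}$, $\{3,4\}$; this is a short finite check on the families $\{\{1,2\},\{1,3\},\{2,4\},\{3,4\}\}$, $\{\{1,2\},\{1,4\},\{2,3\},\{3,4\}\}$, $\{\{1,3\},\{1,4\},\{2,3\},\{2,4\}\}$. Granting it, $\varrho\in\ccc_\xx\meet P_\xx$ precisely when $S_1[1,2]$, $S_1[3,4]$ and $S_4[i,j|k,\ell]$ for all different pairs hold, which is exactly the condition appearing in Theorem~\ref{dist111}(iii) (whose converse is valid for $\xx$-states). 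Hence $\varrho\in(\ccc\meet\aaa)\join(\ccc\meet\bbb)$, and since $\varrho$ is $\xx$-shaped, $\varrho\in(\ccc_\xx\meet\aaa_\xx)\join(\ccc_\xx\meet\bbb_\xx)$, which gives the reverse inclusion and the identity. Apart from the combinatorial coverage claim, everything is immediate from Theorem~\ref{dist111} and the cited separability criteria, so the corollary is essentially bookkeeping placed on top of that theorem.
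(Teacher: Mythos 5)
Your proposal is correct and follows essentially the same route as the paper: the inclusion $\subseteq$ comes from the general lattice inequality (\ref{dist1}), and the reverse inclusion comes from observing that $\ccc_\xx\meet P_\xx$ is characterized by $S_1[1,2]$, $S_1[3,4]$ together with $S_4[i,j|k,\ell]$ for all different pairs (your coverage check of the fifteen pairs of pairs is the right bookkeeping and does hold), which by the converse direction of Theorem~\ref{dist111}(iii) places the state in $(\ccc_\xx\meet\aaa_\xx)\join(\ccc_\xx\meet\bbb_\xx)$. This is precisely the content the paper compresses into the remark that Theorem~\ref{dist111} ``shows'' the identities.
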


We also write
$$
Q:=[\aaa\meet(\bbb\join\ccc)]\join[\bbb\meet(\ccc\join\aaa)]\join[\ccc\meet(\aaa\join\bbb)]
$$
in the lattice ${\mathcal L}$. We will also have later the dual identities
for $\aaa_\xx\join Q_\xx$, $\bbb_\xx\join Q_\xx$ and $\ccc_\xx\join Q_\xx$
in the lattice ${\mathcal L}_\xx$.

It was asked in \cite{han_kye_szalay} whether the lattice ${\mathcal L}$ is complemented or not.
Recall that a lattice $L$ is called complemented if every $x\in L$ has a {\sl complement} $y\in L$
satisfying $x\meet y=0$ and $x\join y=1$, where $0$ and $1$ are the least and greatest elements of $L$,
respectively. Note that $\aaa\meet\bbb\meet\ccc$ and $\aaa\join\bbb\join\ccc$
are the least and the greatest elements of the lattice ${\mathcal L}$.
We will show that $\aaa$ has no complement in the lattice ${\mathcal L}$. To do this, we recall
the results in \cite{han_kye_pe}. We denote by $\Delta$ the collection of eight diagonal states
$\xx(E_i,0,0)$ and $\xx(0,E_i,0)$ with the usual orthonormal basis $\{E_1,E_2,E_3,E_4\}$,
which generate extreme rays of every convex cones in ${\mathcal L}$
by \cite[Theorem 4.3]{han_kye_pe}.
We also denote by $\ext (C)$
the set of points of a convex cone $C$ which generate extreme rays, and put
$$
\E_\aaa=\ext(\aaa_X)\setminus \Delta,\qquad
\E_\bbb=\ext(\bbb_X)\setminus \Delta,\qquad
\E_\ccc=\ext(\ccc_X)\setminus \Delta.
$$
All the states in $\E_\aaa$, $\E_\bbb$, $\E_\ccc$ and $\ext(\aaa_\xx\join\bbb_\xx\join\ccc_\xx)$
have been found in Theorem 3.5 and Theorem 4.6 of \cite{han_kye_pe}. Especially, we have the following:
\begin{itemize}
\item
$\E_\aaa$, $\E_\bbb$ and $\E_\ccc$ are mutually disjoint;
\item
$\ext(\aaa_\xx\join\bbb_\xx\join\ccc_\xx)$ coincides with the disjoint union $\E_\aaa\sqcup \E_\bbb\sqcup\E_\ccc\sqcup\Delta$.
\end{itemize}
Now, we assume that $\aaa$ has a complement $\sigma$ in the lattice ${\mathcal L}$. Then we have
$$
\aaa_\xx\join\bbb_\xx\join\ccc_\xx =\aaa_\xx\join\sigma_\xx,\qquad
\aaa_\xx\meet\bbb_\xx\meet\ccc_\xx =\aaa_\xx\meet\sigma_\xx,
$$
by (\ref{homox}). Then we have
$$
\E_\bbb\cup\E_\ccc
\subset\ext(\aaa_\xx\join\bbb_\xx\join\ccc_\xx)
=\ext(\aaa_\xx\join\sigma_\xx)
\subset\ext(\aaa_\xx)\cup\ext(\sigma_\xx),
$$
which implies $\E_\bbb\cup\E_\ccc \subset\ext(\sigma_\xx)$ by the mutual disjointness of
$\E_\aaa$, $\E_\bbb$, $\E_\ccc$ and $\Delta$. We also have
$$
\Delta\subset\aaa_\xx\meet\bbb_\xx\meet\ccc_\xx =\aaa_\xx\meet\sigma_\xx\subset\sigma_\xx,
$$
and so we have $\bbb_\xx\join\ccc_\xx\subset\sigma_\xx$. Therefore, we have
$$
\aaa_\xx\meet\bbb_\xx\meet\ccc_\xx\subsetneqq
\aaa_\xx\meet(\bbb_\xx\join\ccc_\xx)\subset \aaa_\xx\meet\sigma_\xx=\aaa_\xx\meet\bbb_\xx\meet\ccc_\xx
$$
by the criteria in Section 2. This contradiction shows that the lattice ${\mathcal L}$ is not complemented.

Returning to the inequalities in (\ref{dist1}), we recall that a lattice $L$ is called distributive if all the inequalities in (\ref{dist1})
are identities for every $x,y,z\in L$, and modular if $(x\meet y)\join (x\meet z)$ coincides with
$x\meet (y\join (z\meet x))$ for every $x,y,z\in L$. We have considered in \cite{han_kye_szalay} the $\xx$-state given by
$$
\varrho_1=\xx\left((2,1,1,2)(2,1,1,2),(2,0,1,0)\right)\in\aaa\meet(\bbb\join(\ccc\meet\aaa))
$$
in order to show that the lattice ${\mathcal L}$ is not modular. We also consider
$$
\varrho_2=\xx\left( (2,1,1,2),(2,1,1,2), (2,1,0,0)\right)\in\aaa\meet(\ccc\join(\bbb\meet\aaa))
$$
to see that the strict inequalities
$$
\begin{aligned}
(\aaa_\xx\meet\bbb_\xx)\join(\aaa_\xx\meet\ccc_\xx)&\lneqq \aaa_\xx\meet(\bbb_\xx\join(\ccc_\xx\meet\aaa_\xx)),\\
(\aaa_\xx\meet\bbb_\xx)\join(\aaa_\xx\meet\ccc_\xx)&\lneqq \aaa_\xx\meet(\ccc_\xx\join(\bbb_\xx\meet\aaa_\xx))
\end{aligned}
$$
hold. If we take meet of these two formulae, we see that the identity holds
when we plug $\aaa_\xx$, $\bbb_\xx$ and $\ccc_\xx$ into $x,y$ and $z$
in following general inequality
\begin{equation}\label{aaaaa}
(x\meet y)\join (x\meet z) \le x\meet (y\join (z\meet x))\meet (z\join (y\meet x)).
\end{equation}
See also Corollary \ref{identity_Q}
for the dual identity.
These identities are very special from the view point of general lattice theory or convex geometry.
To see this, we consider the lattice of all convex sets on the plain with respect to the convex hull and intersection.
In this lattice, we take a closed disc with a diameter $\overline{AB}$ and two line segments
$\overline{AC}$ and $\overline{BD}$ so that these line segments touch the disc
at single points $A$ and $B$, respectively. We plug the disc, the line segments $\overline{AC}$ and $\overline{BD}$
into $x,y$ and $z$ in (\ref{aaaaa}). Then the left side is the just line segment $\overline{AB}$, but the right
side is the intersection of the two triangles $\triangle ABC$ and $\triangle ABD$ inside of the disc,
which is much bigger that the line segment $\overline{AB}$ in general.

\section{Criteria for $\aaa^\circ\join(\bbb^\circ\meet\ccc^\circ)$}

In this section, we give criteria for the convex cones of the type  $\aaa^\circ\join(\bbb^\circ\meet\ccc^\circ)$.
For a self-adjoint $W=\xx(s,t,u)$, we consider the inequality $W_4[i,j]$ which combines the following two inequalities:
\begin{center}
\framebox{
\parbox[t][1.4cm]{10.00cm}{
\addvspace{0.1cm} \centering
$$
\begin{array}{ll}
W_{4{\rm a}}[i,j]: &\quad \sqrt{s_it_i}+\sqrt{s_j t_j}+2\min\{\sqrt{s_k t_k},\sqrt{s_\ell t_\ell}\} \ge |u_i|+|u_j|,\\
W_{4{\rm b}}[i,j]: &\quad \sqrt{s_it_i}+\sqrt{s_jt_j}+2(\sqrt{s_kt_k}+\sqrt{s_\ell t_\ell})\ge |u_i|+|u_j|+2\max\{|u_k|,|u_\ell|\}\\
\end{array}
$$
}}
\end{center}\medskip
for a pair $\{i,j\}$, where $\{k,\ell\}$ is chosen so that $\{i,j,k,\ell\}=\{1,2,3,4\}$. We have the following:

\begin{theorem}\label{join-statestep4}
If a self-adjoint three qubit matrix $W$ with the $\xx$-part $\xx(s,t,u)$ belongs to the convex cone
$$
\aaa^\circ\join(\bbb^\circ\meet\ccc^\circ),\qquad
{\rm(}{\text respectively}\
\bbb^\circ\join(\ccc^\circ\meet\aaa^\circ)\
{\text and}\
\ccc^\circ\join(\aaa^\circ\meet\bbb^\circ)
{\rm )}
$$
then $W$ satisfies $W_3$ together with the following:
\begin{enumerate}
\item[(i)]
$W_2[i,j]$ whenever $\{i,j\}$ is one of $\{1,2\}$, $\{1,3\}$, $\{2,4\}$, $\{3,4\}$
{\rm (}respectively
$\{1,2\}$, $\{1,4\}$, $\{2,3\}$, $\{3,4\}$ and $\{1,3\}$, $\{1,4\}$, $\{2,3\}$, $\{2,4\}${\rm )};
\item[(ii)]
$W_{4}[i,j]$ whenever
$\{i,j\}$ is one of $\{1,4\}$, $\{2,3\}$ {\rm (}respectively
$\{1,3\}$, $\{2,4\}$ and $\{1,2\}$, $\{3,4\}${\rm )}.
\end{enumerate}
If $W$ is $\xx$-shaped, then the converse holds.
\end{theorem}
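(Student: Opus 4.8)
Since the inequalities $W_2$, $W_3$, $W_4$ involve only the $\xx$-part $\xx(s,t,u)$ and $W\in\sigma$ forces $W_\xx\in\sigma$ for every $\sigma\in{\mathcal L}^\circ$ by (\ref{x-part-w}), it suffices to prove necessity for $\xx$-shaped $W$; the plan there is an explicit decomposition, and for the converse, duality followed by a case analysis. By the dual analogue of (\ref{homox}), an $\xx$-shaped $W$ in $\aaa^\circ\join(\bbb^\circ\meet\ccc^\circ)$ can be written $W=W_1+W_2$ with $\xx$-shaped $W_1=\xx(s^1,t^1,u^1)\in\aaa^\circ$ and $W_2=\xx(s^2,t^2,u^2)\in\bbb^\circ\meet\ccc^\circ$; by the criteria of Section~2, $W_1$ then satisfies $W_1[1,4]$, $W_1[2,3]$, while $W_2$ satisfies $W_1[1,3]$, $W_1[2,4]$, $W_1[1,2]$, $W_1[3,4]$. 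I would derive each asserted inequality for $W$ by adding the corresponding inequalities for $W_1$ and $W_2$, using $\sqrt{s_it_i}\ge\sqrt{s^1_it^1_i}+\sqrt{s^2_it^2_i}$ and $|u_i|\le|u^1_i|+|u^2_i|$. The inequalities $W_2[i,j]$ of item (i) and $W_3$ come out at once (alternatively they follow from (\ref{dist-ineq-2-dual}) together with the known criteria for $\aaa^\circ\join\bbb^\circ$ and $\aaa^\circ\join\ccc^\circ$). For $W_{4{\rm a}}[1,4]$ and $W_{4{\rm b}}[1,4]$ one first checks them for $W_1$ (trivially, from $W_1[1,4]$ and $W_1[2,3]$) and for $W_2$ (splitting on which of $\sqrt{s^2_2t^2_2},\sqrt{s^2_3t^2_3}$ is smaller, resp.\ which of $|u^2_2|,|u^2_3|$ is larger, and combining two of the four inequalities available for $W_2$), and then adds, using the elementary fact $\min\{a+b,c+d\}\ge\min\{a,c\}+\min\{b,d\}$; the pair $\{2,3\}$ is handled symmetrically.

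For the converse, suppose $W=\xx(s,t,u)$ satisfies $W_3$, the four inequalities of (i), and $W_4[1,4]$, $W_4[2,3]$. Using the bipolar identity together with $(\sigma\join\tau)^\circ=\sigma^\circ\meet\tau^\circ$ and $(\sigma\meet\tau)^\circ=\sigma^\circ\join\tau^\circ$, one has $[\aaa^\circ\join(\bbb^\circ\meet\ccc^\circ)]^\circ=\aaa\meet(\bbb\join\ccc)$, so it is enough to prove $\lan W,\varrho\ran\ge 0$ for all $\varrho\in\aaa\meet(\bbb\join\ccc)$. Since $W$ is $\xx$-shaped, $\lan W,\varrho\ran=\lan W,\varrho_\xx\ran$ and $\varrho_\xx\in\aaa\meet(\bbb\join\ccc)$ by (\ref{x-part}), so I may assume $\varrho=\xx(a,b,z)$ is $\xx$-shaped, hence satisfies $S_1[1,4]$, $S_1[2,3]$, $S_2[1,4]$ by Section~2. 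Bounding $s_ia_i+t_ib_i\ge 2\sqrt{s_it_i}\sqrt{a_ib_i}$ and $\re(u_iz_i)\ge-|u_i||z_i|$ reduces the claim to the scalar inequality
$$
\sum_{i=1}^4\sqrt{s_it_i}\,\sqrt{a_ib_i}\ \ge\ \sum_{i=1}^4|u_i|\,|z_i|.
$$

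To prove this I would first exploit that all the hypotheses are invariant under the index permutations generated by $(1\,4)$, $(2\,3)$ and $(1\,2)(3\,4)$, which act transitively on $\{1,2,3,4\}$, so that one may normalize $|z_1|=\max_i|z_i|$ and $|z_2|\ge|z_3|$. Then one argues by cases according to how the diagonal products $\sqrt{a_ib_i}$ compare with the $|z_j|$ (these comparisons being forced by $S_1[1,4]$, $S_1[2,3]$, $S_2[1,4]$) and according to which of $\sqrt{s_1t_1},\sqrt{s_4t_4}$ and which of $\sqrt{s_2t_2},\sqrt{s_3t_3}$ is smaller; in each case one groups the left-hand sum over $\{1,4\}$ and over $\{2,3\}$, bounds it from below via $xu+yv\ge\min\{x,y\}(u+v)$ on the relevant pair, and dominates $\sum_i|u_i||z_i|$ using a suitable subset of $W_2[1,2]$, $W_2[1,3]$, $W_2[2,4]$, $W_2[3,4]$, $W_{4{\rm a}}[1,4]$, $W_{4{\rm a}}[2,3]$, $W_{4{\rm b}}[1,4]$, $W_{4{\rm b}}[2,3]$ and $W_3$. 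I expect the main obstacle to be the bookkeeping in the subdivision where two anti-diagonal entries of $\varrho$ are of comparable size: that is precisely where the inequalities $W_{4{\rm b}}$ with the $\max$-term are genuinely used, and the argument there is of the same flavour as, though more intricate than, the case analysis in the proof of Theorem~\ref{dist111}.
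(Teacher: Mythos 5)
Your necessity argument is correct and takes a genuinely different route from the paper. The paper proves that $W\in\ccc^\circ\join(\aaa^\circ\meet\bbb^\circ)$ satisfies $W_{4{\rm a}}$ and $W_{4{\rm b}}$ by constructing explicit test states $\varrho_{i,j,k}$ and $\varrho'_{i,j,k,\ell}$ lying in $\ccc\meet(\aaa\join\bbb)$ and expanding $\lan W,\varrho\ran\ge 0$ (after a perturbation to ensure $s_i,t_i>0$). Your decomposition $W=W_1+W_2$ with $\xx$-shaped $W_1\in\aaa^\circ$, $W_2\in\bbb^\circ\meet\ccc^\circ$ is legitimate (the join of the $\xx$-parts argument via (\ref{x-part-w}) works), and the three ingredients you need all check out: $W_{4{\rm a}}[1,4]$ and $W_{4{\rm b}}[1,4]$ for $W_2$ follow by adding the appropriate pair among $W_1[1,2]+W_1[2,4]$ or $W_1[1,3]+W_1[3,4]$, they hold trivially for $W_1$, and all the target inequalities are superadditive under summation because $\sqrt{(s^1_i+s^2_i)(t^1_i+t^2_i)}\ge\sqrt{s^1_it^1_i}+\sqrt{s^2_it^2_i}$, $|u^1_i+u^2_i|\le|u^1_i|+|u^2_i|$, $\min$ is superadditive and $\max$ is subadditive. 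This is arguably cleaner than the paper's test-state construction and avoids the positivity perturbation.

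The converse, however, has a genuine gap: you have set up the correct framework (bipolar duality reducing to $\lan W,\varrho\ran\ge 0$ for $\varrho\in\aaa\meet(\bbb\join\ccc)$, reduction to $\xx$-shaped $\varrho$ satisfying $S_1[1,4]$, $S_1[2,3]$, $S_2[1,4]$, and the scalar inequality $\sum_i\sqrt{s_it_i}\sqrt{a_ib_i}\ge\sum_i|u_i||z_i|$), and your symmetry normalization is sound, but you then defer the entire case analysis with ``one argues by cases \dots I expect the main obstacle to be the bookkeeping.'' That case analysis is the substance of the theorem. The paper's proof first uses $\varrho\notin\aaa$ and $\varrho\notin\bbb$ together with $S_1$ to rule out two of four possibilities and reduce to four cases such as $|z_4|\ge|z_3|\ge\sqrt{a_1b_1}$ or $|z_4|\ge\sqrt{a_1b_1}\ge\sqrt{a_2b_2}$, and in each case runs a precise chain of estimates, e.g.\ using $S_2[1,2]$ to get $\sqrt{a_2b_2}\ge|z_3|+|z_4|-\sqrt{a_1b_1}$, substituting, and then invoking $W_2[4,1]$, $W_{4{\rm a}}[3,4]$ and $W_3$ in a specific order; the $W_{4{\rm b}}$ inequalities enter only in the cases where one must trade $|z_4|$ for $\sqrt{a_1b_1}$ against a coefficient containing $2\sqrt{s_3t_3}+2\sqrt{s_4t_4}$. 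Your proposed organization (grouping the sum over $\{1,4\}$ and $\{2,3\}$ and bounding via $xu+yv\ge\min\{x,y\}(u+v)$) is not what makes the paper's argument close, and nothing in your sketch demonstrates that your version does close. Until the cases are actually written out and each one is discharged by an explicit combination of the hypotheses, the sufficiency direction remains unproved.
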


\begin{proof}
We will prove for the convex cone $\ccc^\circ\join(\aaa^\circ\meet\bbb^\circ)$.
Suppose that $W\in \ccc^\circ\join(\aaa^\circ\meet\bbb^\circ)$. The required inequalities $W_2[i,j]$ and $W_3$
follow from
$$
\ccc^\circ\join(\aaa^\circ\meet\bbb^\circ)\le (\ccc^\circ\join\aaa^\circ)\meet (\ccc^\circ\join\bbb^\circ).
$$
To get the inequalities $W_{4{\rm a}}$ and $W_{4{\rm b}}$, we
may assume that $s_i,t_i>0$ as in the proof of \cite[Theorem 2.1]{han_kye_szalay}.
We consider
$$
\begin{aligned}
\varrho_{i,j,k}:=&\xx(\sqrt{\tfrac{t_i}{s_i}}E_i + \sqrt{\tfrac{t_j}{s_j}}E_j + 2\sqrt{\tfrac{t_k}{s_k}}E_k,\\
                  &\qquad\qquad\qquad \sqrt{\tfrac{s_i}{t_i}}E_i + \sqrt{\tfrac{s_j}{t_j}}E_j + 2\sqrt{\tfrac{s_k}{t_k}}E_k,\\
                 &\qquad\qquad\qquad\qquad\qquad\qquad -e^{-{\rm i}\theta_i}E_i -e^{-{\rm i}\theta_j}E_j), \\
\varrho_{i,j,k,\ell}':=&\xx(\sqrt{\tfrac{t_i}{s_i}}E_i + \sqrt{\tfrac{t_j}{s_j}}E_j + 2\sqrt{\tfrac{t_k}{s_k}}E_k
                           + 2\sqrt{\tfrac{t_\ell}{s_\ell}}E_\ell,\\
                  &\qquad\qquad\qquad \sqrt{\tfrac{s_i}{t_i}}E_i + \sqrt{\tfrac{s_j}{t_j}}E_j + 2\sqrt{\tfrac{s_k}{t_k}}E_k
                             + 2\sqrt{\tfrac{s_\ell}{t_\ell}}E_\ell, \\
                   &\qquad\qquad\qquad\qquad\qquad\qquad -e^{-{\rm i}\theta_i}E_i - e^{-{\rm i}\theta_j}E_j - 2e^{-{\rm i}\theta_k}E_k),
\end{aligned}
$$
where $\theta_m = \arg z_m$.
When $\{\{i,j\},\{k,\ell\}\}=\{\{1,2\},\{3,4\}\}$, both of them satisfy $S_1[1,2]$, $S_1[3,4]$, $S_2[1,2]$, and so,
belong to $\ccc\meet(\aaa\join\bbb)$.
We expand $\langle W, \varrho_{i,j,k} \rangle \ge 0$ and $\langle W, \varrho_{i,j,k,\ell}' \rangle \ge 0$
to get the required inequalities $W_{\rm 4a}[i,j]$ and $W_{\rm 4b}[i,j]$.

For the converse, it suffices to show the
inequality $\lan W,\varrho\ran\ge 0$ under the following assumptions:
\begin{itemize}
\item
$W=\xx(s,t,u)$ satisfies $W_3$;
\item
$W=\xx(s,t,u)$ satisfies
$W_2[i,j]$ whenever $\{i,j\}$ is one of $\{1,3\}$, $\{1,4\}$, $\{2,3\}$, $\{2,4\}$;
\item
$W=\xx(s,t,u)$ satisfies
both $W_{4{\rm a}}[i,j]$ and $W_{4{\rm b}}[i,j]$
whenever
$\{i,j\}$ is one of $\{1,2\}$, $\{3,4\}$;
\item
$\varrho=\xx(a,b,z)\in\ccc\meet(\aaa\join\bbb)$, or equivalently
$\varrho$ satisfies
$S_1[1,2]$, $S_1[3,4]$ and $S_2[1,2]$.
\end{itemize}
By the inequalities $W_3$, $W_2[1,4]$ and $W_2[2,3]$, we have $W\in\bbb^\circ\join\ccc^\circ$.
Similarly, we also have $W\in \ccc^\circ\join\aaa^\circ$ by $W_3$, $W_2[1,3]$ and $W_2[2,4]$.
If $\varrho\in (\ccc\meet\aaa)\join(\ccc\meet\bbb)$, then we have $\lan W,\varrho\ran\ge 0$ by the duality.
If $\varrho\notin (\ccc\meet\aaa)\join(\ccc\meet\bbb)$ then both $\varrho\notin\aaa$ and $\varrho\notin\bbb$ hold, since $\varrho\in\ccc$.
By $\varrho\notin\aaa$, we may assume that
$$
|z_4|>\sqrt{a_1b_1}
$$
without loss of generality. As for $\varrho\notin\bbb$, we have the following four possibilities:
$$
|z_3|>\sqrt{a_1b_1}, \qquad  |z_1|>\sqrt{a_3b_3}, \qquad |z_2|>\sqrt{a_4b_4}, \qquad  |z_4|>\sqrt{a_2b_2}.
$$
The second implies $|z_4|>\sqrt{a_1b_1} \ge |z_1| > \sqrt{a_3b_3}\ge |z_4|$ by $S_1[3,4]$, which is a contradiction.
Because the third also implies $|z_2|>\sqrt{a_4b_4} \ge |z_4| > \sqrt{a_1b_1}\ge |z_2|$ by $S_1[1,2]$, we have
two possibilities, the first and the fourth. We consider the following four cases:
\begin{enumerate}
\item[(I)]
$|z_4| \ge |z_3| \ge \sqrt{a_1b_1}$,
\item[(II)]
$|z_3| \ge |z_4| \ge \sqrt{a_1b_1}$,
\item[(III)]
$|z_4| \ge \sqrt{a_1b_1} \ge \sqrt{a_2b_2}$,
\item[(IV)]
$|z_4| \ge \sqrt{a_2b_2} \ge \sqrt{a_1b_1}$.
\end{enumerate}

For the case (I), we use the inequality $S_2[1,2]$ to see
$\sqrt{a_2b_2} \ge |z_3|+|z_4|-\sqrt{a_1b_1}$. Therefore, we have
\begin{align*}
\tfrac 12\lan W,\varrho\ran
&\ge \textstyle\sum_{i=1}^4 \sqrt{s_i t_i} \sqrt{a_i b_i} - |u_i| |z_i|\\
& = ( \sqrt{s_2t_2}\sqrt{a_2b_2} + \sqrt{s_3t_3}\sqrt{a_3b_3} + \sqrt{s_4t_4}\sqrt{a_4b_4} -|u_4||z_4|) \\
& \qquad\qquad\qquad +( \sqrt{s_1t_1}\sqrt{a_1b_1} -|u_1||z_1| -|u_2||z_2| -|u_3||z_3|) \\
& \ge \sqrt{s_2t_2}(|z_3|+|z_4|-\sqrt{a_1b_1}) + \sqrt{s_3t_3}|z_4| + \sqrt{s_4t_4}|z_4| - |u_4||z_4| \\
& \qquad\qquad\qquad  +( \sqrt{s_1t_1}\sqrt{a_1b_1} -|u_1|\sqrt{a_1b_1} -|u_2|\sqrt{a_1b_1} -|u_3||z_3|),
\end{align*}
by $S_1[3,4]$ and $S_1[1,2]$. We continue as follows:
\begin{align*}
 & =(\sqrt{s_2t_2} + \sqrt{s_3t_3} + \sqrt{s_4t_4} -|u_4|) |z_4| + (\sqrt{s_2t_2}-|u_3|)|z_3|\\
& \qquad\qquad\qquad +(\sqrt{s_1t_1} - \sqrt{s_2t_2} -|u_1| -|u_2|)\sqrt{a_1b_1} \\
& \ge (\sqrt{s_2t_2} + \sqrt{s_3t_3} + \sqrt{s_4t_4} -|u_4|) |z_3| + (\sqrt{s_2t_2}-|u_3|)|z_3| \\
& \qquad\qquad\qquad +(\sqrt{s_1t_1} - \sqrt{s_2t_2} -|u_1| -|u_2|)\sqrt{a_1b_1}
\end{align*}
by $W_2[4,1]$. This is equal to
\begin{align*}
& = (2\sqrt{s_2t_2} + \sqrt{s_3t_3} + \sqrt{s_4t_4} - |u_3| - |u_4|) |z_3|\\
& \qquad\qquad\qquad+(\sqrt{s_1t_1} - \sqrt{s_2t_2} -|u_1| -|u_2|)\sqrt{a_1b_1} \\
& \ge (2\sqrt{s_2t_2} + \sqrt{s_3t_3} + \sqrt{s_4t_4} - |u_3| - |u_4|) \sqrt{a_1b_1}\\
& \qquad\qquad\qquad +(\sqrt{s_1t_1} - \sqrt{s_2t_2} -|u_1| -|u_2|)\sqrt{a_1b_1} \\
& = \left( \textstyle\sum_{i=1}^4 \sqrt{s_it_i} - \textstyle\sum_{i=1}^4 |u_i| \right) \sqrt{a_1b_1}
\end{align*}
by $W_{4{\rm a}}[3,4]$, which is nonnegative by $W_3$.
For the case (II), note that the conclusion and all the conditions on $W$ and $\varrho$ are invariant
under switching the first and the second subsystems, except $|z_3| \ge |z_4| \ge \sqrt{a_1b_1}$.
It changes $|z_3| \ge |z_4| \ge \sqrt{a_1b_1}$
into $|\bar z_4| \ge |\bar z_3| \ge \sqrt{a_1b_1}$, which is exactly the case (I).

For the case (III), we first note the following inequality
$$
|u_3||z_3|\le
\left(\textstyle\sum_{i=1}^4 \sqrt{s_it_i}-|u_1|-|u_2|- |u_4| \right) (\sqrt{a_1b_1}+\sqrt{a_2b_2}-|z_4|)
$$
by $W_3$ and $S_2[1,2]$. Therefore, we have
\begin{align*}
\textstyle\sum_{i=1}^4 \sqrt{s_i t_i} \sqrt{a_i b_i}& - |u_i| |z_i|
\ge ( \sqrt{s_1t_1}\sqrt{a_1b_1} + \sqrt{s_2t_2}\sqrt{a_2b_2} + \sqrt{s_3t_3}|z_4| + \sqrt{s_4t_4}|z_4|) \\
& \qquad +(  -|u_1|\sqrt{a_2b_2} -|u_2|\sqrt{a_2b_2} -|u_4||z_4|) \\
& \qquad -(\textstyle\sum_{i=1}^4 \sqrt{s_it_i}-|u_1|-|u_2|- |u_4| ) (\sqrt{a_1b_1}+\sqrt{a_2b_2}-|z_4|)
\end{align*}
by
$S_1[3,4]$ and $S_1[1,2]$.
This is equal to the following:
\begin{align*}
& = (\sqrt{s_1t_1} + \sqrt{s_2t_2} + 2\sqrt{s_3t_3} + 2\sqrt{s_4t_4} -|u_1| -|u_2| -2|u_4|) |z_4| \\
& \qquad + (-\sqrt{s_2t_2}-\sqrt{s_3t_3}-\sqrt{s_4t_4}+|u_1|+|u_2|+|u_4|)\sqrt{a_1b_1} \\
& \qquad +(-\sqrt{s_1t_1}-\sqrt{s_3t_3}-\sqrt{s_4t_4}+|u_4|)\sqrt{a_2b_2}.
\end{align*}
Using $W_{4{\rm b}}[1,2]$, we may replace $|z_4|$ by $\sqrt{a_1b_1}$ to the smaller quanity
$$
(\sqrt{s_1t_1}+\sqrt{s_3t_3}+\sqrt{s_4t_4}-|u_4|)(\sqrt{a_1b_1}-\sqrt{a_2b_2}).
$$
This is nonnegative by $W_2[4,2]$, and so we completed the proof for the case (III).

It remains to prove the case (IV).
The conclusion and all the conditions on $W$ and $\varrho$ except $|z_4| \ge \sqrt{a_2b_2} \ge \sqrt{a_1b_1}$
are invariant under switching the first and the second subsystems and
the local unitary operation by $I \otimes I \otimes \begin{pmatrix}0&1\\1&0\end{pmatrix}$. They change
$|z_4| \ge \sqrt{a_2b_2} \ge \sqrt{a_1b_1}$ into
$|z_3| = |\bar z_3| \ge \sqrt{a_2b_2} \ge \sqrt{a_1b_1}$, and again into
$|z_4| \ge \sqrt{a_1b_1} \ge \sqrt{a_2b_2}$.
This is the case (III).
\end{proof}

\section{Criteria for $(\aaa^\circ\meet\bbb^\circ)\join(\aaa^\circ\meet\ccc^\circ)$}

As for the convex cones of the type $(\aaa^\circ\meet\bbb^\circ)\join(\aaa^\circ\meet\ccc^\circ)$,
we have the following criteria:

\begin{theorem}\label{dist11122}
For a three qubit self-adjoint matrix $W$ with the $\xx$-part $\xx(s,t,u)$, we have the following:
\begin{enumerate}
\item[(i)]
if $W\in (\aaa^\circ\meet\bbb^\circ)\join(\aaa^\circ\meet\ccc^\circ)$, then inequalities
$W_1[1,4]$, $W_1[2,3]$, $W_3$, $W_2[i,j]$, $W_4[i,j]$ hold for every pair $\{i,j\}$;
\item[(ii)]
if $W\in (\bbb^\circ\meet\ccc^\circ)\join(\bbb^\circ\meet\aaa^\circ)$, then inequalities
$W_1[1,3]$, $W_1[2,4]$, $W_3$, $W_2[i,j]$, $W_4[i,j]$ hold for every pair $\{i,j\}$;
\item[(iii)]
if $W\in (\ccc^\circ\meet\aaa^\circ)\join(\ccc^\circ\meet\beta^\circ)$, then inequalities
$W_1[1,2]$, $W_1[3,4]$, $W_3$, $W_2[i,j]$, $W_4[i,j]$ hold for every pair $\{i,j\}$.
\end{enumerate}
If $W$ is $\xx$-shaped, then the converses also hold.
\end{theorem}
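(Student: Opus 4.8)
The plan is to prove (iii); statements (i) and (ii) will then follow by permuting the three subsystems $A$, $B$, $C$. For the necessity I would specialize the general lattice inequality (\ref{dist1}) to $x=\ccc^\circ$, $y=\aaa^\circ$, $z=\bbb^\circ$, which gives
$$
(\ccc^\circ\meet\aaa^\circ)\join(\ccc^\circ\meet\bbb^\circ)\ \le\
\ccc^\circ\meet[\ccc^\circ\join(\aaa^\circ\meet\bbb^\circ)]\meet[\aaa^\circ\join(\bbb^\circ\meet\ccc^\circ)]\meet[\bbb^\circ\join(\ccc^\circ\meet\aaa^\circ)] .
$$
Membership of $W$ in $\ccc^\circ$ yields $W_1[1,2]$ and $W_1[3,4]$ by the criterion recalled in Section~2, and membership in each of the three cones $\ccc^\circ\join(\aaa^\circ\meet\bbb^\circ)$, $\aaa^\circ\join(\bbb^\circ\meet\ccc^\circ)$, $\bbb^\circ\join(\ccc^\circ\meet\aaa^\circ)$ yields, by Theorem~\ref{join-statestep4}, the inequality $W_3$ together with four inequalities $W_2[\cdot,\cdot]$ and two inequalities $W_4[\cdot,\cdot]$; a direct inspection of the three lists in Theorem~\ref{join-statestep4} shows that their union consists of $W_2[i,j]$ and $W_4[i,j]$ for all six pairs $\{i,j\}$, which is exactly the asserted list.

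For the sufficiency I would assume $W=\xx(s,t,u)$ is $\xx$-shaped and satisfies all the listed inequalities, and use duality: since $[(\ccc^\circ\meet\aaa^\circ)\join(\ccc^\circ\meet\bbb^\circ)]^\circ=(\ccc\join\aaa)\meet(\ccc\join\bbb)$, it suffices to prove $\lan W,\varrho\ran\ge0$ for every $\varrho\in(\ccc\join\aaa)\meet(\ccc\join\bbb)$. As $W$ is $\xx$-shaped, $\lan W,\varrho\ran=\lan W,\varrho_\xx\ran$ and $\varrho_\xx\in(\ccc\join\aaa)\meet(\ccc\join\bbb)$ by (\ref{x-part}), so one may take $\varrho=\xx(a,b,z)$; the criteria of Section~2 then say that $\varrho$ satisfies $S_2[1,3]$ and $S_2[1,4]$. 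Writing $\tfrac12\lan W,\varrho\ran=\tfrac12\sum_{i=1}^4[s_ia_i+t_ib_i+2\re(u_iz_i)]$ and using $\tfrac12(s_ia_i+t_ib_i)\ge\sqrt{s_it_i}\sqrt{a_ib_i}$ together with $\re(u_iz_i)\ge-|u_i||z_i|$, the task reduces to proving
$$
\textstyle\sum_{i=1}^4\big(\sqrt{s_it_i}\,\sqrt{a_ib_i}-|u_i||z_i|\big)\ \ge\ 0 .
$$

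To prove this last inequality I would argue by cases. If $\varrho\in\ccc$, i.e. $S_1[1,2]$ and $S_1[3,4]$ hold, then $\lan W,\varrho\ran\ge0$ because $W\in\ccc^\circ$; so one may assume $\varrho\notin\ccc$, whence one of $S_1[1,2]$, $S_1[3,4]$ fails. Exactly as in the proofs of Theorems~\ref{dist111} and~\ref{join-statestep4}, the symmetries induced by swapping the first two qubits and by bit-flips on individual qubits (which permute the index set $\{1,2,3,4\}$ while preserving $\ccc$ and all the hypotheses) allow one to normalize the situation so that a fixed $|z_m|$ strictly exceeds a fixed $\sqrt{a_nb_n}$, the analogue of $|z_4|>\sqrt{a_1b_1}$ in those proofs. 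Feeding this into $S_2[1,3]$ and $S_2[1,4]$ confines $\varrho$ to a short list of sub-configurations recording which of the remaining $|z_i|$ are large relative to which $\sqrt{a_jb_j}$. In each sub-configuration I would bound the displayed sum from below by repeatedly replacing some $\sqrt{a_jb_j}$ by a larger $|z_k|$, or some $|z_k|$ by a smaller $\sqrt{a_jb_j}$, using the $S_1$- and $S_2$-type inequalities available for $\varrho$, then regroup the surviving terms so that their coefficients match one of $W_2[i,j]$, $W_{4{\rm a}}[i,j]$, $W_{4{\rm b}}[i,j]$, and finally apply $W_3$ to conclude nonnegativity, exactly the pattern of the telescoping chains in the two earlier proofs.

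The hard part will be this case analysis. One must organize the sub-configurations so that each is settled by a single telescoping chain, and, within each chain, choose at every step the correct pair $\{i,j\}$ whose inequality $W_2[i,j]$, $W_{4{\rm a}}[i,j]$ or $W_{4{\rm b}}[i,j]$ is to be applied after the current regrouping. Compared with Theorem~\ref{join-statestep4}, all six inequalities $W_2[i,j]$ and all six $W_4[i,j]$ are now at one's disposal, while $W_1$ is available only for the two pairs $\{1,2\}$ and $\{3,4\}$ attached to $\ccc$, so the bookkeeping of which inequality is admissible at a given stage is the delicate point. The symmetry reductions indicated above should collapse most sub-configurations onto one another, leaving only a handful of genuinely distinct chains to be written out in full; once this is done, the identities for $\aaa_\xx^\circ\join Q_\xx$, $\bbb_\xx^\circ\join Q_\xx$ and $\ccc_\xx^\circ\join Q_\xx$ promised in Corollary~\ref{identity_Q} follow in the same way as Corollary~\ref{identity_P} followed from Theorem~\ref{dist111}.
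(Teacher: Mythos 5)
Your necessity argument is complete and is the same as the paper's: specialize the lattice inequality (\ref{dist1}) to the dual cones, read off $W_1[1,2]$ and $W_1[3,4]$ from $W\in\ccc^\circ$, and apply Theorem \ref{join-statestep4} to the three cones $\ccc^\circ\join(\aaa^\circ\meet\bbb^\circ)$, $\aaa^\circ\join(\bbb^\circ\meet\ccc^\circ)$, $\bbb^\circ\join(\ccc^\circ\meet\aaa^\circ)$; your observation that the union of their $W_2$- and $W_4$-lists is all six pairs is correct. The duality reduction for sufficiency, the passage to the $\xx$-part, and the lower bound $\tfrac12\lan W,\varrho\ran\ge\sum_{i=1}^4(\sqrt{s_it_i}\sqrt{a_ib_i}-|u_i||z_i|)$ are likewise exactly the paper's moves.

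The genuine gap is that the sufficiency direction is only sketched, and the part you defer is the entire content of the theorem. The claim to be certified is that this specific finite list of inequalities on $W$ dominates every $\varrho$ in the dual cone, and that can only be established by actually exhibiting the telescoping chains; announcing that the pattern of Theorems \ref{dist111} and \ref{join-statestep4} \emph{should} work does not verify that the available inequalities suffice (indeed the paper later observes that some of them are redundant, so the bookkeeping is delicate rather than formulaic). In the paper this step requires five genuinely distinct cases after symmetry reduction of eight, two of which split further into subcases governed by auxiliary quantities such as $\lambda_i=|z_i|-\sqrt{a_1b_1}$ or $\mu_i=|z_4|-\sqrt{a_ib_i}$ and by the sign of $\sqrt{s_2t_2}-|u_3|$, with a nonobvious choice at each step among $W_2$, $W_{4{\rm a}}$, $W_{4{\rm b}}$, $W_1$ and $W_3$. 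A secondary structural flaw in your setup: you dispose only of $\varrho\in\ccc$ before assuming the contrary, whereas you must also eliminate $\varrho\in\aaa$ and $\varrho\in\bbb$ first (via Theorem \ref{join-statestep4}, since e.g.\ $\varrho\in\aaa$ forces $\varrho\in\aaa\meet(\bbb\join\ccc)$ and your hypotheses place $W$ in its dual cone). Only after assuming $\varrho\notin\aaa$, $\varrho\notin\bbb$ and $\varrho\notin\ccc$ simultaneously do you obtain the three families of strict violations (the paper's conditions (A), (B), (C)) that make the case analysis tractable; assuming only $\varrho\notin\ccc$, as written, leaves too few constraints on $\varrho$ to close the estimates.
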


\begin{proof}
We will prove (i) and its converse for $W=\xx(s,t,u)$. The necessity follows from the inclusion
$$
(\aaa^\circ\meet\bbb^\circ)\join(\aaa^\circ\meet\ccc^\circ)
\le
\aaa^\circ\meet (\aaa^\circ\join(\bbb^\circ\meet\ccc^\circ))\meet
(\bbb^\circ\join(\aaa^\circ\meet\ccc^\circ))\meet(\ccc^\circ\join(\aaa^\circ\meet\bbb^\circ)),
$$
by \cite[Proposition 3.3]{han_kye_pe} and Theorem \ref{join-statestep4}.

We prove the converse when $W=\xx(s,t,u)$. To do this, we suppose the following:
\begin{itemize}
\item
$W=\xx(s,t,u)$ satisfies
$W_1[1,4]$, $W_1[2,3]$, $W_3$, $W_2[i,j]$, $W_4[i,j]$ hold for every pair $\{i,j\}$;
\item
$\varrho=\xx(a,b,z) \in (\aaa \join \bbb) \meet (\aaa \join \ccc)$, that is, satisfies $S_2[1,2]$, $S_2[1,3]$,
\end{itemize}
and prove the inequality
\begin{equation}\label{ineqxxxx}
\lan W,\varrho\ran\ge 0.
\end{equation}
If $\varrho\in\aaa$ then we have (\ref{ineqxxxx}) since $W\in\aaa^\circ$ by $W_1[1,4]$ and $W_1[2,3]$.
If $\varrho\in\bbb$ then we have $\varrho\in\bbb\meet(\aaa\join\ccc)$, and so
the inequality (\ref{ineqxxxx}) follows since $W\in\bbb^\circ\join(\ccc^\circ\meet\aaa^\circ)$.
We also have (\ref{ineqxxxx}) when $\varrho\in\ccc$ by the same reasoning.

Therefore, we may assume that $\varrho\notin\aaa$, $\varrho\notin\bbb$ and $\varrho\notin\ccc$.
By $\varrho \notin \alpha$, we may assume
$$
(A 4)~ |z_4|>\sqrt{a_1b_1}.
$$
By the assumption $\varrho\notin\beta$, there are four possibilities
$$
(B1)~ |z_1|>\sqrt{a_3b_3}, \quad (B2)~ |z_2|>\sqrt{a_4b_4},
\quad  (B3)~ |z_3|>\sqrt{a_1b_1}, \quad  (B4)~ |z_4|>\sqrt{a_2b_2}.
$$
We also have the following four cases
$$
(C1)~ |z_1|>\sqrt{a_2b_2}, \quad  (C2)~ |z_2|>\sqrt{a_1b_1},
 \quad (C3)~ |z_3|>\sqrt{a_4b_4},  \quad  (C4)~ |z_4|>\sqrt{a_3b_3},
$$
by the assumption $\varrho\notin\ccc$.
Under the assumption (A4), we have the implications
$(B1) \Rightarrow (C4)$, $(B2) \Rightarrow (C2)$, $(C1) \Rightarrow (B4)$ and $(C3) \Rightarrow (B3)$.
Conversely, the condition (A4) can be implied as
$(B3), (C4) \Rightarrow (A4)$ and $(B4), (C2) \Rightarrow (A4)$.
Hence, it suffices to consider the following eight cases:
$$
\begin{aligned}
&(A4), (B1); \quad &(A4), (B2); \quad &(A4), (C1); \quad &(A4), (B3), (C2); \\
&(A4), (C3); \quad &(B3), (C4); \quad &(B4), (C2); \quad &(A4), (B4), (C4). \\
\end{aligned}
$$
Switching the second and third subsystems interchanges
$$
(A4), (B1) \leftrightarrow (A4), (C1), \quad (A4), (B2) \leftrightarrow (A4), (C3), \quad (B3), (C4) \leftrightarrow (C2), (B4).
$$
Therefore, it suffices to consider the following five cases:
$$
(B3), (C4) ; \quad (A4), (B3), (C2); \quad (A4), (B4), (C4) ; \quad (A4), (B1); \quad (A4), (B2).
$$

[Case I: (B3) and (C4)]. We have
$|z_4| \ge \sqrt{a_3b_3} \ge |z_3| \ge \sqrt{a_1b_1}$, and
\begin{align*}
{\tfrac 12}\lan W, \varrho \ran
& \ge \textstyle\sum_{i=1}^4 \sqrt{s_i t_i} \sqrt{a_i b_i} - |u_i| |z_i| \\
& \ge  \sqrt{s_1t_1}\sqrt{a_1b_1} + \sqrt{s_2t_2}(|z_3|+|z_4|-\sqrt{a_1b_1}) + \sqrt{s_3t_3}\sqrt{a_3b_3} + \sqrt{s_4t_4}|z_4| \\
& \qquad - |u_1|\sqrt{a_1b_1} - (\textstyle\sum_{i=1}^4 \sqrt{s_it_i}-\textstyle\sum_{i \ne 2} |u_i| )
                     (\sqrt{a_1b_1}+\sqrt{a_3b_3}-|z_4|)\\
& \qquad - |u_3||z_3|- |u_4||z_4|
\end{align*}
by $S_2[1,2]$, $W_3$ and $S_2[1,3]$. This is equal to
\begin{align*}
& = (\sqrt{s_1t_1} + 2\sqrt{s_2t_2} + \sqrt{s_3t_3} +2\sqrt{s_4t_4} -|u_1|-|u_3|-2|u_4|) |z_4| \\
& \qquad +(-\textstyle\sum_{i \ne 3} \sqrt{s_it_i} + \sum_{i \ne 2} |u_i|) \sqrt{a_3b_3} +(\sqrt{s_2t_2}- |u_3| ) |z_3| \\
& \qquad +(-2\sqrt{s_2t_2} - \sqrt{s_3t_3} - \sqrt{s_4t_4} +|u_3| +|u_4|) \sqrt{a_1b_1}.
\end{align*}
Therefore, applying $W_{\rm 4b}[1,3]$, we have
\begin{align*}
{\tfrac 12}\lan W, \varrho \ran
& \ge (\sqrt{s_1t_1} + 2\sqrt{s_2t_2} + \sqrt{s_3t_3} +2\sqrt{s_4t_4} -|u_1|-|u_3|-2|u_4|) \sqrt{a_3b_3} \\
& \qquad +(-\textstyle\sum_{i \ne 3} \sqrt{s_it_i} + \sum_{i \ne 2} |u_i| ) \sqrt{a_3b_3}  +(\sqrt{s_2t_2}- |u_3| ) |z_3| \\
& \qquad +(-2\sqrt{s_2t_2} - \sqrt{s_3t_3} - \sqrt{s_4t_4} +|u_3| +|u_4|) \sqrt{a_1b_1}\\
& = (\sqrt{s_2t_2} + \sqrt{s_3t_3} + \sqrt{s_4t_4} -|u_4|) \sqrt{a_3b_3}
    +(\sqrt{s_2t_2}- |u_3| ) |z_3| \\
& \qquad +(-2\sqrt{s_2t_2} - \sqrt{s_3t_3} - \sqrt{s_4t_4} +|u_3| +|u_4|) \sqrt{a_1b_1}.
\end{align*}
By $W_{2}[4,1]$, $W_{\rm 4a}[3,4]$ and $|z_3|\ge \sqrt{a_1b_1}$, this is greater than or equal to
\begin{align*}
& \ge (\sqrt{s_2t_2} + \sqrt{s_3t_3} + \sqrt{s_4t_4} -|u_4|) |z_3| +(\sqrt{s_2t_2}- |u_3| ) |z_3| \\
& \qquad\qquad +(-2\sqrt{s_2t_2} - \sqrt{s_3t_3} - \sqrt{s_4t_4} +|u_3| +|u_4|) |z_3|
= 0.
\end{align*}

[Case II: (A4), (B3) and (C2)]. We have $|z_4|, |z_3|, |z_2| \ge \sqrt{a_1b_1}$. In this case,
we may assume that $|z_2| \ge |z_3|$ by switching the second and the third subsystems.
Put
$$
\lambda_2:=|z_2|-\sqrt{a_1b_1}, \quad \lambda_3:=|z_3|-\sqrt{a_1b_1}, \quad \lambda_4:=|z_4|-\sqrt{a_1b_1},
$$
which are nonnegative by the assumption.
We proceed by considering two subcases.

[Subcase II-1: $\lambda_2 \le \lambda_3 + \lambda_4$]. We have
\begin{align*}
\tfrac 12\lan W, \varrho \ran
& \ge \textstyle\sum_{i=1}^4 \sqrt{s_i t_i} \sqrt{a_i b_i} - |u_i| |z_i| \\
& \ge \sqrt{s_1t_1}\sqrt{a_1b_1}  + \sqrt{s_2t_2}(|z_3|+|z_4|-\sqrt{a_1b_1})\\
& \qquad + \sqrt{s_3t_3}(|z_2|+|z_4|-\sqrt{a_1b_1}) + \sqrt{s_4t_4}|z_4|  -|u_1|\sqrt{a_1b_1} - \textstyle\sum_{i=2}^4 |u_i||z_i|,
\end{align*}
by $S_2[1,2]$ and $S_2[1,3]$. By a direct calculation, this is equal to
\begin{align*}
& =\left(\textstyle\sum_{i=1}^4 \sqrt{s_it_i} - \textstyle\sum_{i=1}^4 |u_i|\right) \sqrt{a_1b_1} \\
& \qquad + (\sqrt{s_3t_3}-|u_2|)\lambda_2
+ (\sqrt{s_2t_2}-|u_3|)\lambda_3
+ (\sqrt{s_2t_2}+\sqrt{s_3t_3}+\sqrt{s_4t_4}-|u_4|)\lambda_4,
\end{align*}
which is greater than or equal to the following
\begin{equation}\label{mmmm}
(\sqrt{s_3t_3}-|u_2|)\lambda_2
+ (\sqrt{s_2t_2}-|u_3|)\lambda_3
+ (\sqrt{s_2t_2}+\sqrt{s_3t_3}+\sqrt{s_4t_4}-|u_4|)\lambda_4,
\end{equation}
by $W_3$.
We note that the sum of the following two terms
$$
\sqrt{s_3t_3}-|u_2|, \qquad \sqrt{s_2t_2}-|u_3|
$$
are nonnegative by $W_1[2,3]$, and so at most one of them is negative possibly. If both of them are nonnegative, then
the proof is complete by $W_2[4,1]$.
If $\sqrt{s_2t_2}-|u_3|<0$, then
we replace $\lambda_3$ in (\ref{mmmm}) by $\lambda_2$ which satisfies $\lambda_2\ge\lambda_3$ by $|z_2| \ge |z_3|$, to get
$$
\begin{aligned}
\tfrac 12\lan W, \varrho \ran
\ge & (\sqrt{s_3t_3}-|u_2|)\lambda_2
+ (\sqrt{s_2t_2}-|u_3|)\lambda_2
+ (\sqrt{s_2t_2}+\sqrt{s_3t_3}+\sqrt{s_4t_4}-|u_4|)\lambda_4 \\
= & (\sqrt{s_2t_2}+\sqrt{s_3t_3}-|u_2|-|u_3|)\lambda_2
+ (\sqrt{s_2t_2}+\sqrt{s_3t_3}+\sqrt{s_4t_4}-|u_4|)\lambda_4,
\end{aligned}
$$
which is nonnegative by $W_1[2,3]$ and $W_2[4,1]$ again.
If $\sqrt{s_3t_3}-|u_2|<0$, then we replace $\lambda_2$ in (\ref{mmmm})
by $\lambda_3+\lambda_4$ to get
$$
\begin{aligned}
\tfrac 12\lan W, \varrho \ran
\ge &(\sqrt{s_3t_3}-|u_2|)(\lambda_3+\lambda_4)
+ (\sqrt{s_2t_2}-|u_3|)\lambda_3
+ (\sqrt{s_2t_2}+\sqrt{s_3t_3}+\sqrt{s_4t_4}-|u_4|)\lambda_4 \\
= & (\sqrt{s_2t_2}+\sqrt{s_3t_3}-|u_2|-|u_3|)\lambda_3
+ (\sqrt{s_2t_2}+2\sqrt{s_3t_3}+\sqrt{s_4t_4}-|u_2|-|u_4|)\lambda_4,
\end{aligned}
$$
which is also nonnegative by $W_1[2,3]$ and $W_{\rm 4a}[2,4]$.

[Subcase II-2: $\lambda_2 \ge \lambda_3 + \lambda_4$]: In this case, we have
\begin{align*}
\tfrac 12\lan W, \varrho \ran
& \ge \textstyle\sum_{i=1}^4 \sqrt{s_i t_i} \sqrt{a_i b_i} - |u_i| |z_i| \\
& \ge \sqrt{s_1t_1}\sqrt{a_1b_1}
 + \sqrt{s_2t_2}|z_2| + \sqrt{s_3t_3}(|z_2|+|z_4|-\sqrt{a_1b_1}) + \sqrt{s_4t_4}|z_4| \\
& \qquad -|u_1|\sqrt{a_1b_1} - \textstyle\sum_{i=2}^4 |u_i||z_i|,
\end{align*}
by $S_2[1,3]$. This is equal to the following:
\begin{align*}
& = \left(\textstyle\sum_{i=1}^4 \sqrt{s_it_i} - \textstyle\sum_{i=1}^4 |u_i|\right) \sqrt{a_1b_1} \\
& \qquad + \left((\sqrt{s_2t_2}+\sqrt{s_3t_3}-|u_2|-|u_3|)+|u_3|\right)\lambda_2
-|u_3|\lambda_3
+ (\sqrt{s_3t_3}+\sqrt{s_4t_4}-|u_4|)\lambda_4,
\end{align*}
which is, by $W_3$ and $W_1[2,3]$, greater than or equal to
\begin{align*}
\ge &\left((\sqrt{s_2t_2}+\sqrt{s_3t_3}-|u_2|-|u_3|)+|u_3|\right)(\lambda_3+\lambda_4)
-|u_3|\lambda_3
+ (\sqrt{s_3t_3}+\sqrt{s_4t_4}-|u_4|)\lambda_4 \\
= &\left(\sqrt{s_2t_2}+\sqrt{s_3t_3}-|u_2|-|u_3|\right)\lambda_3
+ \left(\sqrt{s_2t_2}+2\sqrt{s_3t_3}+\sqrt{s_4t_4}-|u_2|-|u_4|\right)\lambda_4.
\end{align*}
This is nonnegative by $W_1[2,3]$ and $W_{\rm 4a}[2,4]$.

[Case III: (A4), (B4), (C4)]. We have $|z_4| \ge \sqrt{a_1b_1}, \sqrt{a_2b_2}, \sqrt{a_3b_3}$.
We may assume that $\sqrt{a_2b_2} \ge \sqrt{a_3b_3}$ by switching the second and the third subsystems.
Put
$$
\mu_1:=|z_4|-\sqrt{a_1b_1}, \quad \mu_2:=|z_4|-\sqrt{a_2b_2}, \quad \mu_3:=|z_4|-\sqrt{a_3b_3}.
$$
Note that $\mu_2\le\mu_3$ by assumption. We proceed by considering two subcases.

[Subcase III-1: $\mu_3 \le \mu_1 + \mu_2$]. In this case, we have
\begin{align*}
\tfrac 12\lan W, \varrho \ran
& \ge \textstyle\sum_{i=1}^4 \sqrt{s_i t_i} \sqrt{a_i b_i} - |u_i| |z_i| \\
& \ge \textstyle\sum_{i=1}^3 \sqrt{s_it_i}\sqrt{a_ib_i} + \sqrt{s_4t_4}|z_4|  - |u_1|\sqrt{a_1b_1} - |u_2|(\sqrt{a_1b_1}+\sqrt{a_3b_3}-|z_4|)\\
&\qquad - |u_3|(\sqrt{a_1b_1}+\sqrt{a_2b_2}-|z_4|) - |u_4||z_4|,
\end{align*}
by $S_2[1,3]$ and $S_2[1,2]$. By a direct computation, this becomes
\begin{equation}\label{estimate}
\begin{aligned}
& = \left(\textstyle\sum_{i=1}^4 \sqrt{s_it_i} - \textstyle\sum_{i=1}^4 |u_i|\right) \sqrt{a_1b_1} \\
& \qquad + (\textstyle\sum_{i \ne 1}\sqrt{s_it_i}-|u_4|)\mu_1 + (-\sqrt{s_2t_2}+|u_3|)\mu_2 + (-\sqrt{s_3t_3}+|u_2|)\mu_3.
\end{aligned}
\end{equation}
If $-\sqrt{s_2t_2}+|u_3| < 0$, then this is, by $W_2[4,1]$ and $\mu_2 \le \mu_3$, greater
than or equal to the following
$$
\begin{aligned}
&\ge
\left(\textstyle\sum_{i=1}^4 \sqrt{s_it_i} - \textstyle\sum_{i=1}^4 |u_i|\right) \sqrt{a_1b_1}
     + (-\sqrt{s_2t_2}+|u_3|-\sqrt{s_3t_3}+|u_2|)\mu_3\\
&\ge
\left(\textstyle\sum_{i=1}^4 \sqrt{s_it_i} - \textstyle\sum_{i=1}^4 |u_i|\right) \sqrt{a_1b_1}
     + (-\sqrt{s_2t_2}+|u_3|-\sqrt{s_3t_3}+|u_2|)\sqrt{a_1b_1}\\
&=(\sqrt{s_1t_1}+\sqrt{s_4t_4}-|u_1|-|u_4|)\sqrt{a_1b_1}
\end{aligned}
$$
by $W_1[2,3]$ and $\sqrt{a_1b_1}\ge \mu_3$ using $S_2[1,3]$. This is nonnegative by
$W_1[1,4]$.
In the case of $-\sqrt{s_2t_2}+|u_3| \ge 0$,
the term (\ref{estimate}) is equal to
\begin{align*}
&= \left(\textstyle\sum_{i=1}^4 \sqrt{s_it_i} - \textstyle\sum_{i=1}^4 |u_i|\right) \sqrt{a_1b_1}
        +  (\textstyle\sum_{i \ne 1}\sqrt{s_it_i}-|u_4|)\mu_1 \\
& \qquad +  (-\sqrt{s_2t_2}+|u_3|)\mu_2 + (-\sqrt{s_2t_2}-\sqrt{s_3t_3}+|u_2|+|u_3|)\mu_3 + (\sqrt{s_2t_2}-|u_3|)\mu_3,
\end{align*}
which is greater than or equal to the following
\begin{align*}
& \ge \left(\textstyle\sum_{i=1}^4 \sqrt{s_it_i} - \textstyle\sum_{i=1}^4 |u_i|\right) \sqrt{a_1b_1}
         +  (\textstyle\sum_{i \ne 1}\sqrt{s_it_i}-|u_4|)\mu_1 \\
& \qquad +  (-\sqrt{s_2t_2}+|u_3|)\mu_2 + (-\sqrt{s_2t_2}-\sqrt{s_3t_3}+|u_2|+|u_3|)\sqrt{a_1b_1} + (\sqrt{s_2t_2}-|u_3|)(\mu_1+\mu_2)
\end{align*}
by $W_1[2,3]$ and $\sqrt{a_1b_1}\ge \mu_3$. This becomes
\begin{align*}
& = (\sqrt{s_1t_1}+\sqrt{s_4t_4}-|u_1|-|u_4|)\sqrt{a_1b_1} + (2\sqrt{s_2t_2}+\sqrt{s_3t_3}+\sqrt{s_4t_4}-|u_3|-|u_4|)\mu_1,
\end{align*}
which is nonnegative by $W_1[1,4]$ and $W_{\rm 4a}[3,4]$.

[Subcase III-2: $\mu_3 \ge \mu_1 + \mu_2$]. We use $S_2[1,3]$ to get the inequality
\begin{align*}
\tfrac 12\lan W, \varrho \ran
& \ge \textstyle\sum_{i=1}^4 \sqrt{s_i t_i} \sqrt{a_i b_i} - |u_i| |z_i| \\
& \ge \textstyle\sum_{i=1}^3 \sqrt{s_it_i}\sqrt{a_ib_i} + \sqrt{s_4t_4}|z_4| \\
& \qquad - |u_1|\sqrt{a_1b_1} - |u_2|(\sqrt{a_1b_1}+\sqrt{a_3b_3}-|z_4|)
- |u_3|\sqrt{a_3b_3} - |u_4||z_4|,
\end{align*}
which is equal to
\begin{align*}
& = \left(\textstyle\sum_{i=1}^4 \sqrt{s_it_i} - \textstyle\sum_{i=1}^4 |u_i|\right) \sqrt{a_1b_1}
               + (\sqrt{s_2t_2}+\sqrt{s_3t_3}+\sqrt{s_4t_4}-|u_3|-|u_4|)\mu_1 \\
& \qquad  - \sqrt{s_2t_2}\mu_2 + \sqrt{s_2t_2}\mu_3 + (-\sqrt{s_2t_2}-\sqrt{s_3t_3}+|u_2|+|u_3|)\mu_3,
\end{align*}
by a direct calculation. Using $W_1[2,3]$ and $\mu_3\le
\sqrt{a_1b_1}$, we continue
\begin{align*}
& \ge \left(\textstyle\sum_{i=1}^4 \sqrt{s_it_i} - \textstyle\sum_{i=1}^4 |u_i|\right) \sqrt{a_1b_1}
             + (\sqrt{s_2t_2}+\sqrt{s_3t_3}+\sqrt{s_4t_4}-|u_3|-|u_4|)\mu_1 \\
& \qquad  - \sqrt{s_2t_2}\mu_2 + \sqrt{s_2t_2}(\mu_1+\mu_2) + (-\sqrt{s_2t_2}-\sqrt{s_3t_3}+|u_2|+|u_3|)\sqrt{a_1b_1} \\
& = (\sqrt{s_1t_1}+\sqrt{s_4t_4}-|u_1|-|u_4|)\sqrt{a_1b_1} + (2\sqrt{s_2t_2}+\sqrt{s_3t_3}+\sqrt{s_4t_4}-|u_3|-|u_4|)\mu_1,
\end{align*}
which is nonnegative by $W_1[1,4]$ and $W_{\rm 4a}[3,4]$.

[Case IV: (A4), (B1)]. We have $|z_4| \ge \sqrt{a_1b_1} \ge |z_1|
\ge \sqrt{a_3b_3}$. By Case III, we may suppose that
$\sqrt{a_2b_2} \ge |z_4|$. By $W_3$ and $S_2[1,3]$, we have
\begin{align*}
\tfrac 12\lan W, \varrho \ran
& \ge \textstyle\sum_{i=1}^4 \sqrt{s_i t_i} \sqrt{a_i b_i} - |u_i| |z_i| \\
& \ge  \sqrt{s_1t_1}\sqrt{a_1b_1} + \sqrt{s_2t_2}|z_4| + \sqrt{s_3t_3}\sqrt{a_3b_3} + \sqrt{s_4t_4}|z_4| - |u_1||z_1|\\
& \quad  - (\textstyle\sum_{i=1}^4 \sqrt{s_it_i}-\textstyle\sum_{i \ne 2} |u_i| )
                (\sqrt{a_1b_1}+\sqrt{a_3b_3}-|z_4|)- |u_3|\sqrt{a_3b_3}- |u_4||z_4|
\end{align*}
which becomes
\begin{align*}
& = (\sqrt{s_1t_1} + 2\sqrt{s_2t_2} + \sqrt{s_3t_3} +2\sqrt{s_4t_4} -|u_1|-|u_3|-2|u_4|) |z_4|\\
& \qquad +(-\textstyle\sum_{i \ne 1} \sqrt{s_it_i} + \textstyle\sum_{i \ne 2} |u_i| ) \sqrt{a_1b_1} \\
& \qquad - |u_1| |z_1|
 +\left((-\sqrt{s_1t_1}-\sqrt{s_4t_4} + |u_1|+|u_4|)-\sqrt{s_2t_2} \right) \sqrt{a_3b_3}.
\end{align*}
Using $W_{\rm 4b}[1,3]$ and $W_1[1,4]$, we continue
\begin{align*}
& \ge (\sqrt{s_1t_1} + 2\sqrt{s_2t_2} + \sqrt{s_3t_3} +2\sqrt{s_4t_4} -|u_1|-|u_3|-2|u_4|) \sqrt{a_1b_1} \\
& \qquad +(-\textstyle\sum_{i \ne 1} \sqrt{s_it_i} + \textstyle\sum_{i \ne 2} |u_i| ) \sqrt{a_1b_1} \\
& \qquad - |u_1| |z_1|  +\left((-\sqrt{s_1t_1}-\sqrt{s_4t_4} + |u_1|+|u_4|)-\sqrt{s_2t_2} \right) |z_1|,
\end{align*}
which is equal to
\begin{align*}
 =(\sqrt{s_1t_1}  + \sqrt{s_2t_2}+\sqrt{s_4t_4} -|u_4|) (\sqrt{a_1b_1}-|z_1|).
\end{align*}
This is nonnegative by $W_2[4,3]$.

[Case V: (A4), (B2)]. In this case, we have $|z_2| \ge \sqrt{a_4b_4} \ge |z_4| \ge \sqrt{a_1b_1}$].
By Case II, we may suppose that $\sqrt{a_1b_1} \ge |z_3|$. Using $S_2[1,3]$, we proceed
\begin{align*}
\tfrac 12\lan W, \varrho \ran
& \ge \textstyle\sum_{i=1}^4 \sqrt{s_i t_i} \sqrt{a_i b_i} - |u_i| |z_i| \\
& \ge \sqrt{s_1t_1}\sqrt{a_1b_1} + \sqrt{s_2t_2}|z_2| + \sqrt{s_3t_3}(|z_2|+|z_4|-\sqrt{a_1b_1}) + \sqrt{s_4t_4}\sqrt{a_4b_4} \\
& \qquad - |u_1|\sqrt{a_1b_1} - |u_2| |z_2| - |u_3|\sqrt{a_1b_1}- |u_4||z_4| \\
& = \left((\sqrt{s_2t_2} + \sqrt{s_3t_3} -|u_2|-|u_3|)+|u_3|\right) |z_2| \\
& \qquad +\sqrt{s_4t_4} \sqrt{a_4b_4}  +(\sqrt{s_3t_3}-|u_4|) |z_4| +(\sqrt{s_1t_1}-\sqrt{s_3t_3} - |u_1|-|u_3|) \sqrt{a_1b_1}.
\end{align*}
By $W_1[2,3]$ and $|z_2|\ge |z_4|$, we have
\begin{align*}
\tfrac 12\lan W, \varrho \ran
& \ge \left((\sqrt{s_2t_2} + \sqrt{s_3t_3} -|u_2|-|u_3|)+|u_3|\right) |z_4| \\
& \qquad +\sqrt{s_4t_4} |z_4|  +(\sqrt{s_3t_3}-|u_4|) |z_4|  +(\sqrt{s_1t_1}-\sqrt{s_3t_3} - |u_1|-|u_3|) \sqrt{a_1b_1}
\end{align*}
This is equal to
$$
=(\sqrt{s_2t_2} + 2\sqrt{s_3t_3} + \sqrt{s_4t_4} -|u_2|-|u_4|) |z_4|
+(\sqrt{s_1t_1}-\sqrt{s_3t_3} - |u_1|-|u_3|) \sqrt{a_1b_1}.
$$
Using $W_{\rm 4a}[2,4]$, we may replace $|z_4|$ by $\sqrt{a_1b_1}$, to get
\begin{align*}
\tfrac 12\lan W, \varrho \ran
& \ge (\sqrt{s_2t_2} + 2\sqrt{s_3t_3} + \sqrt{s_4t_4} -|u_2|-|u_4|) \sqrt{a_1b_1} \\
& \qquad +(\sqrt{s_1t_1}-\sqrt{s_3t_3} - |u_1|-|u_3|) \sqrt{a_1b_1},
\end{align*}
which is equal to $\left(\sum_{i=1}^4\sqrt{s_it_i}-|u_i|\right) \sqrt{a_1b_1}\ge 0$ by $W_3$.
This completes the proof.
\end{proof}

We recall the definition of $Q$ in Section 3:
$$
Q:=[\aaa\meet(\bbb\join\ccc)]\join[\bbb\meet(\ccc\join\aaa)]\join[\ccc\meet(\aaa\join\bbb)].
$$
Then we have
$$
Q^\circ=[\aaa^\circ\join(\bbb^\circ\meet\ccc^\circ)]\meet[\bbb^\circ\join(\ccc^\circ\meet\aaa^\circ)]
   \meet[\ccc^\circ\join(\aaa^\circ\meet\bbb^\circ)].
$$
For a given $\sigma\in{\mathcal L}$, we define
$$
\begin{aligned}
\sigma^\circ_\xx := \sigma^\circ\cap \xx
&=\{W\in\xx : \lan W,\varrho\ran\ge 0\ {\text{\rm for every}}\ \varrho\in\sigma\}\\
&=\{W\in\xx : \lan W,\varrho\ran\ge 0\ {\text{\rm for every}}\ \varrho\in\sigma_\xx\}.
\end{aligned}
$$
The last identity follows from (\ref{x-part-w}), and we see that the {\xx}-part $(\sigma^\circ)_\xx$ of $\sigma^\circ$
coincides with the dual of $(\sigma_\xx)^\circ$ of $\sigma_\xx$ in the space $\xx$.
By Theorem \ref{dist11122}, we have the
following:

\begin{corollary}
We have the following identities
$$
\begin{aligned}
(\aaa^\circ_\xx\meet\bbb^\circ_\xx)\join(\aaa^\circ_\xx\meet\ccc^\circ_\xx)&=\aaa^\circ_\xx\meet Q^\circ_\xx,\\
(\bbb^\circ_\xx\meet\ccc^\circ_\xx)\join(\bbb^\circ_\xx\meet\aaa^\circ_\xx)&=\bbb^\circ_\xx\meet Q^\circ_\xx,\\
(\ccc^\circ_\xx\meet\aaa^\circ_\xx)\join(\ccc^\circ_\xx\meet\bbb^\circ_\xx)&=\ccc^\circ_\xx\meet Q^\circ_\xx.
\end{aligned}
$$
\end{corollary}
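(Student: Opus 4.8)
The plan is to describe both sides of each identity as convex cones in the space $\xx$ cut out by the boxed inequalities for $\xx$-shaped witnesses, and then to check that the two resulting systems of inequalities describe the same matrices. It suffices to treat the first identity, the other two being obtained by the relabelling of $A,B,C$ used in the proofs of Theorems \ref{join-statestep4} and \ref{dist11122}, which simultaneously permutes the indices $1,2,3,4$ and the three pairings $\{\{1,4\},\{2,3\}\},\{\{1,3\},\{2,4\}\},\{\{1,2\},\{3,4\}\}$. Throughout I use that the $\xx$-part commutes with $\meet$ and $\join$ on ${\mathcal L}^\circ$, i.e.\ $(\sigma\meet\tau)_\xx=\sigma_\xx\meet\tau_\xx$ and $(\sigma\join\tau)_\xx=\sigma_\xx\join\tau_\xx$ for $\sigma,\tau\in{\mathcal L}^\circ$; this is obtained exactly as (\ref{homox}) was obtained from (\ref{x-part}), now starting from (\ref{x-part-w}). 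Consequently $(\aaa^\circ_\xx\meet\bbb^\circ_\xx)\join(\aaa^\circ_\xx\meet\ccc^\circ_\xx)$ is precisely the set of $\xx$-shaped matrices in $(\aaa^\circ\meet\bbb^\circ)\join(\aaa^\circ\meet\ccc^\circ)$, while $Q^\circ_\xx$ is the intersection of the sets of $\xx$-shaped matrices in $\aaa^\circ\join(\bbb^\circ\meet\ccc^\circ)$, $\bbb^\circ\join(\ccc^\circ\meet\aaa^\circ)$ and $\ccc^\circ\join(\aaa^\circ\meet\bbb^\circ)$.

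Next I would read off the inequalities defining $\aaa^\circ_\xx\meet Q^\circ_\xx$ for an $\xx$-shaped $W=\xx(s,t,u)$. By the criteria recalled in Section 2, $W\in\aaa^\circ_\xx$ iff $W_1[1,4]$ and $W_1[2,3]$ hold. By Theorem \ref{join-statestep4}, $W$ lies in the three cones making up $Q^\circ$ iff, respectively, $W_3$ holds together with $W_2[i,j]$ for $\{1,2\},\{1,3\},\{2,4\},\{3,4\}$ and $W_4[i,j]$ for $\{1,4\},\{2,3\}$; with $W_2[i,j]$ for $\{1,2\},\{1,4\},\{2,3\},\{3,4\}$ and $W_4[i,j]$ for $\{1,3\},\{2,4\}$; and with $W_2[i,j]$ for $\{1,3\},\{1,4\},\{2,3\},\{2,4\}$ and $W_4[i,j]$ for $\{1,2\},\{3,4\}$. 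The conjunction of these three systems is governed by the elementary observation that the three four-element lists of $W_2$-pairs have union equal to all six pairs and the three two-element lists of $W_4$-pairs also have union all six pairs; hence $W\in Q^\circ_\xx$ iff $W_3$, $W_2[i,j]$ for every pair $\{i,j\}$, and $W_4[i,j]$ for every pair $\{i,j\}$ hold. Therefore $W\in\aaa^\circ_\xx\meet Q^\circ_\xx$ iff $W_1[1,4]$, $W_1[2,3]$, $W_3$, $W_2[i,j]$ for every pair, and $W_4[i,j]$ for every pair hold.

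Finally, this is exactly the system of inequalities that, by Theorem \ref{dist11122}(i), characterizes membership of an $\xx$-shaped $W$ in $(\aaa^\circ\meet\bbb^\circ)\join(\aaa^\circ\meet\ccc^\circ)$. Since $(\aaa^\circ_\xx\meet\bbb^\circ_\xx)\join(\aaa^\circ_\xx\meet\ccc^\circ_\xx)$ and $\aaa^\circ_\xx\meet Q^\circ_\xx$ are two convex cones in $\xx$ consisting of exactly the same $\xx$-shaped matrices, they coincide, which proves the first identity; the second and third follow by the symmetry noted above. The whole argument is a matter of matching up boxed inequalities, and the only step that needs a little care is the combinatorial verification that the $W_2$- and $W_4$-index lists of Theorem \ref{join-statestep4} exhaust all six pairs once the three cones comprising $Q^\circ$ are intersected; granting Theorems \ref{join-statestep4} and \ref{dist11122}, I anticipate no real obstacle.
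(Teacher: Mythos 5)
Your proposal is correct and follows essentially the same route as the paper, which deduces the corollary directly from Theorem \ref{dist11122} by observing that the inequality system characterizing $\xx$-shaped members of $(\aaa^\circ\meet\bbb^\circ)\join(\aaa^\circ\meet\ccc^\circ)$ (namely $W_1[1,4]$, $W_1[2,3]$, $W_3$, and $W_2[i,j]$, $W_4[i,j]$ for all pairs) coincides with the union of the systems for $\aaa^\circ_\xx$ and for the three cones whose meet is $Q^\circ$, as given in Section 2 and Theorem \ref{join-statestep4}. Your explicit verification that the $W_2$- and $W_4$-index lists exhaust all six pairs, and your use of the dual analogue of (\ref{homox}), merely spell out what the paper leaves implicit.
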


Taking the dual cones in the vector space $\xx$, we get the following identities among the
generators $\aaa_\xx$, $\bbb_\xx$ and $\ccc_\xx$ of the lattice
${\mathcal L}_\xx$, which is the lattice theoretic dual identities of those in Corollary \ref{identity_P}.

\begin{corollary}\label{identity_Q}
We have the following identities
$$
\begin{aligned}
(\aaa_\xx\join\bbb_\xx)\meet(\aaa_\xx\join\ccc_\xx)&=\aaa_\xx\join Q_\xx,\\
(\bbb_\xx\join\ccc_\xx)\meet(\bbb_\xx\join\aaa_\xx)&=\bbb_\xx\join Q_\xx,\\
(\ccc_\xx\join\aaa_\xx)\meet(\ccc_\xx\join\bbb_\xx)&=\ccc_\xx\join Q_\xx.
\end{aligned}
$$
\end{corollary}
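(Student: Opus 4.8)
The plan is to obtain the three identities of Corollary~\ref{identity_Q} by taking polar cones, in the ambient space $\xx$, of the three identities just established in the Corollary preceding it, namely $(\aaa^\circ_\xx\meet\bbb^\circ_\xx)\join(\aaa^\circ_\xx\meet\ccc^\circ_\xx)=\aaa^\circ_\xx\meet Q^\circ_\xx$ and its two cyclic companions. The only tools needed are the polar calculus for convex cones in a finite-dimensional space carrying a bilinear pairing, that is $(C\join D)^\circ=C^\circ\meet D^\circ$, $(C\meet D)^\circ=\overline{C^\circ\join D^\circ}$ and $C^{\circ\circ}=\overline C$, together with the identification $(\sigma_\xx)^\circ=\sigma^\circ_\xx$ that was recorded just before that Corollary.

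The one step that is not purely formal is the removal of the closure bars above. This rests on the observation that every cone $\sigma_\xx$ with $\sigma\in{\mathcal L}$ consists of positive semidefinite matrices, a property inherited from the generators $\aaa$, $\bbb$, $\ccc$ and preserved under both $\meet$ and $\join$. Hence $\sigma_\xx\cap\{\tr X=1\}$ is compact and $\sigma_\xx$ is a closed cone; moreover for $\sigma,\tau\in{\mathcal L}$ the set $\sigma_\xx\join\tau_\xx=\sigma_\xx+\tau_\xx$ is exactly the cone over $\conv$ of the union of these two compact slices, hence again a cone over a compact convex body and so closed. Therefore $(\sigma_\xx)^{\circ\circ}=\sigma_\xx$ for every $\sigma\in{\mathcal L}$, and none of the closures intervene for the cones occurring here.

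Granting this, take polars in $\xx$ of both sides of $(\aaa^\circ_\xx\meet\bbb^\circ_\xx)\join(\aaa^\circ_\xx\meet\ccc^\circ_\xx)=\aaa^\circ_\xx\meet Q^\circ_\xx$. The left side turns into $(\aaa^\circ_\xx\meet\bbb^\circ_\xx)^\circ\meet(\aaa^\circ_\xx\meet\ccc^\circ_\xx)^\circ=(\aaa_\xx\join\bbb_\xx)\meet(\aaa_\xx\join\ccc_\xx)$ and the right side into $(\aaa^\circ_\xx)^\circ\join(Q^\circ_\xx)^\circ=\aaa_\xx\join Q_\xx$, using $\sigma_\xx^{\circ\circ}=\sigma_\xx$ throughout; this is the first identity. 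Since $Q$, and hence $Q^\circ$, is invariant under every permutation of the three subsystems, the same computation applied to the remaining two identities of the preceding Corollary, or equivalently the cyclic relabelling $A\to B\to C\to A$, yields the other two. The main (indeed the only) obstacle is thus the closedness bookkeeping of the second paragraph; it is worth noting that one inclusion, $\aaa_\xx\join Q_\xx\subseteq(\aaa_\xx\join\bbb_\xx)\meet(\aaa_\xx\join\ccc_\xx)$, already holds for purely lattice-theoretic reasons, because $\aaa\meet(\bbb\join\ccc)\le\aaa$ while $\bbb\meet(\ccc\join\aaa)$ and $\ccc\meet(\aaa\join\bbb)$ both sit below $\aaa\join\bbb$ and below $\aaa\join\ccc$, so the content of the corollary is precisely that this inclusion becomes an equality on the $\xx$-part.
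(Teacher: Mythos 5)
Your proposal is correct and follows exactly the paper's route: the paper derives Corollary~\ref{identity_Q} in one line by ``taking the dual cones in the vector space $\xx$'' of the identities in the preceding corollary, which is precisely your polar-calculus argument. Your additional bookkeeping on closedness of the cones $\sigma_\xx$ (so that $\sigma_\xx^{\circ\circ}=\sigma_\xx$ and no closure bars intervene) is a detail the paper leaves implicit, and your verification of it is sound.
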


\section{Greenberger-Horne-Zeilinger diagonal states}

We recall that an $\xx$-state $\xx(a,b,z)$ is GHZ diagonal if and
only if $a=b$ and $z\in\mathbb R^4$. More generally, we will say
that an \xx-shaped matrix $\xx(s,t,u)$ is GHZ diagonal if and only
if $s=t$ and $u\in\mathbb R^4$. In this section, we exhibit all the
GHZ diagonal states which belong to the convex cones considered in
this paper. For this purpose, we search for extreme rays of the
corresponding convex cones. We consider the following local
operation
$$
U=\begin{pmatrix}0&1\\1&0\end{pmatrix} \otimes \begin{pmatrix}0&1\\1&0\end{pmatrix} \otimes \begin{pmatrix}0&1\\1&0\end{pmatrix},
$$
which interchanges $|0\ran$ and $|1\ran$ in each subsystem $M_2 \otimes M_2 \otimes M_2$.
For an $\xx$-shaped self-adjoint three qubit matrix $W=\xx(s,t,u)$, we define
$$
\widetilde W:=UWU^*=\xx(t,s,\bar u).
$$
We also define
$$
W_\GHZ:={W + \widetilde W \over 2} = X\left({s+t \over 2}, {s+t \over 2}, {\rm Re} ~u\right).
$$
Then an \xx-shaped matrix $W$ is GHZ diagonal if and only if $W=W_\GHZ$, and the identity
\begin{equation}\label{kjhftyd}
\lan W_\GHZ,\varrho\ran=\lan W,\varrho_\GHZ\ran
\end{equation}
holds for every $\xx$-shaped $W$ and $\varrho$, as in the proof of
\cite[Theorem 3.2]{han_kye_GHZ}.
We also define $W_\GHZ=(W_\xx)_\GHZ$ for a general three qubit self-adjoint matrix $W$.
Then, $W$ is GHZ diagonal if and only if $W=W_\GHZ$. Further, the identity (\ref{kjhftyd}) also holds
for every self-adjoint $\sigma$ and $W$.
We denote by $V$ the eight dimensional real vector space consisting of GHZ diagonal matrices.
If we take a convex cone $\sigma\in{\mathcal L}$, then
the dual cone $(\sigma\cap V)^\circ$ of $\sigma\cap V$ in the vector space $V$ coincides
with $\sigma^\circ\cap V$, because
$$
\lan W, \varrho \ran = \lan W_\GHZ, \varrho \ran = \lan W, \varrho_\GHZ \ran \ge 0,
$$
for $W \in (\sigma\cap V)^\circ$ and $\varrho \in \sigma$.
For GHZ diagonal states, we will use the notation
\begin{equation}\label{GHZ_sym}
\xx(a,a,z)=\xx{a_1~a_2~a_3~a_4\choose z_1~z_2~z_3~z_4},
\end{equation}
with real variables $a_i$ and $z_i$ for $i=1,2,3,4$.

We also note that all the conditions $S_1$, $S_2$, $S_3$ and $S_4$ for GHZ diagonal states are determined by
finitely many linear inequalities with respect to eight real variables. For example,
the inequality $S_4[i,j|k,\ell]$ is actually the combination of the following eight linear inequalities:
$$
a_i+a_j\ge \pm z_k \pm z_\ell,\quad
a_k+a_\ell\ge \pm z_i \pm z_j
$$
for a GHZ diagonal state $\xx{a_1~a_2~a_3~a_4\choose z_1~z_2~z_3~z_4}$.
For a given convex cone $\sigma\in{\mathcal L}$ considered in this paper,
the convex cone $\sigma\cap V$ is determined by finitely many hyperplanes which cover all
maximal faces of $\sigma\cap V$. The same is true for $\sigma^\circ\cap V$.
If $\varrho$ is an extreme ray of $\sigma\cap V$,
then its dual face
$$
\varrho^\prime:=\{W\in (\sigma\cap V)^\circ: \lan W,\varrho\ran=0\}
$$
of the dual cone $\sigma^\circ\cap V$ is a maximal face. See
\cite[Theorem 5.3]{han_kye_pe}. Therefore, we conclude that
$\sigma\cap V$ has only finitely many extreme rays, which must be
orthogonal to a hyperplane determining a maximal face
of the dual cone $\sigma^\circ\cap V$. Considering the
coefficients of the characteristic linear inequalities of
$\sigma^\circ\cap V$, it is straightforward to find all the
candidates  of extreme rays of the convex cone $\sigma\cap V$. We
denote by $\CE_\sigma$ the set of candidates for extreme rays of
$\sigma\cap V$ arising in this way. We take a chain of convex cones
from the diagram (\ref{diag_con}), and list up in {\sc Table 2} all
the candidates in $\CE_\sigma$. Note that the diagonal states
$$
\Delta=\{\textstyle
\xx{1\,0\,0\,0\choose 0\,0\,0\,0},\
\xx{0\,1\,0\,0\choose 0\,0\,0\,0},\
\xx{0\,0\,1\,0\choose 0\,0\,0\,0},\
\xx{0\,0\,0\,1\choose 0\,0\,0\,0}
\}
$$
are extreme rays of all the convex cones.

At this moment, it must  be pointed out that several
inequalities of Theorem \ref{dist11122} are redundant. Let us
consider the criteria of
$(\aaa^\circ_\xx\meet\bbb^\circ_\xx)\join(\aaa^\circ_\xx\meet\ccc^\circ_\xx)$
in Theorem \ref{dist11122}. One may easily see that $W_1[1,4]$ and
$W_1[2,3]$ imply the inequalities $W_3$, $W_2[1,2]$, $W_2[1,3]$,
$W_2[2,4]$, $W_2[3,4]$ and $W_4[1,4]$, $W_4[2,3]$. We may also add one of
$W_1[1,4]$, $W_1[2,3]$ and one of $W_{4{\rm a}}[1,2]$, $W_{4{\rm
a}}[1,3]$, $W_{4{\rm a}}[2,4]$, $W_{4{\rm a}}[3,4]$, to get the
inequalities $W_{4{\rm b}}[1,2]$, $W_{4{\rm b}}[1,3]$, $W_{4{\rm
b}}[2,4]$ and $W_{4{\rm b}}[3,4]$.
For an example, we get a part of $W_{4{\rm b}}[1,2]$ from
$W_1[2,3]$ and $W_{4{\rm a}}[1,3]$.
In {\sc Table 2}, the states determined by these redundant inequalities are excluded.

\renewcommand{\arraystretch}{1.5}
\begin{table}
\begin{center}
\begin{tabular}{|c|l|c|}
\hline $\sigma$ &
\qquad\qquad\qquad\qquad\qquad$\CE_\sigma$
   & number of $\CE_\sigma$\\
\hline $\aaa\meet\bbb\meet\ccc$
  &$\Delta$,\ $\xx{~1~~1~~1~~1\choose \mathsmaller{\pm}\!1~\mathsmaller{\pm}\!1~\mathsmaller{\pm}\!1~\mathsmaller{\pm}\!1}$
  &$4+16=20$\\
\hline $\aaa\meet\bbb$
  &add $\xx{~1\,0\,1\,1\choose \mathsmaller{\pm}\!1\,0\,0\,0}$\ \
   $\xx{0\,~1\,1\,1\choose 0\,\mathsmaller{\pm}\!1\,0\,0}$\ \
   $\xx{1\,1\,~1\,0\choose 0\,0\,\mathsmaller{\pm}\!1\,0}$\ \
   $\xx{1\,1\,0\,~1\choose 0\,0\,0\,\mathsmaller{\pm}\!1}$
  &$20+8=28$\\
\hline $(\aaa\meet\bbb)\join(\aaa\meet\ccc)$
  &add $\xx{~1\,1\,0\,1\choose \mathsmaller{\pm}\!1\,0\,0\,0}$\ \
   $\xx{0\,1\,~1\,1\choose 0\,0\,\mathsmaller{\pm}\!1\,0}$\ \
   $\xx{1\,~1\,1\,0\choose 0\,\mathsmaller{\pm}\!1\,0\,0}$\ \
   $\xx{1\,0\,1\,~1\choose 0\,0\,0\,\mathsmaller{\pm}\!1}$
   &$28+8=36$\\
\hline $\aaa\meet(\bbb\join\ccc)$
  &add $\xx{~1\,2\,0\,~1\choose \mathsmaller{\pm}\!1\,0\,0\mathsmaller{\pm}\!1}$\ \ \
   $\xx{~1\,0\,2\,~1\choose \mathsmaller{\pm}\!1\,0\,0\,\mathsmaller{\pm}\!1}$\ \ \
   $\xx{2\,~1\,~1\,0\choose 0\,\mathsmaller{\pm}\!1\,\mathsmaller{\pm}\!1\,0}$\ \ \
   $\xx{0\,~1\,~1\,2\choose 0\,\mathsmaller{\pm}\!1\,\mathsmaller{\pm}\!1\,0}$
  &$36+48=84$\\
& \phantom{add} $\xx{~1\,~2\,2\,~1\choose \mathsmaller{\pm}\!1\,\mathsmaller{\pm}\!2\,0\,\mathsmaller{\pm}\!1}$\ \
   $\xx{~1\,2\,~2\,~1\choose \mathsmaller{\pm}\!1\,0\,\mathsmaller{\pm}\!2\,\mathsmaller{\pm}\!1}$\ \
   $\xx{~2\,~1\,~1\,2\choose \mathsmaller{\pm}\!2\,\mathsmaller{\pm}\!1\,\mathsmaller{\pm}\!1\,0}$\ \
   $\xx{2\,~1\,~1\,~2\choose 0\,\mathsmaller{\pm}\!1\,\mathsmaller{\pm}\!1\,\mathsmaller{\pm}\!2}$
  &\\
\hline\hline $\aaa$
  &$\Delta$,\ $\xx{~1\,0\,0\,~1\choose \mathsmaller{\pm}\!1\,0\,0\,\mathsmaller{\pm}\!1}$\ \
   $\xx{0\,~1\,~1\,0\choose 0\,\mathsmaller{\pm}\!1\,\mathsmaller{\pm}\!1\,0}$
   &$4+8=12$\\
\hline $\aaa\join(\bbb\meet\ccc)$
  &add $\xx{~1\,1\,1\,0\choose \mathsmaller{\pm}\!1\,0\,0\,0}$, \
   $\xx{0\,1\,1\,~1\choose 0\,0\,0\,\mathsmaller{\pm}\!1}$\ \
   $\xx{1\,~1\,0\,1\choose 0\,\mathsmaller{\pm}\!1\,0\,0}$\ \
   $\xx{1\,0\,~1\,1\choose 0\,0\,\mathsmaller{\pm}\!1\,0}$
  &$12+8=20$\\
\hline $(\aaa\join\bbb)\meet(\aaa\join\ccc)$
  &add $\xx{~1\,~1\,2\,0\choose \mathsmaller{\pm}\!1\,\mathsmaller{\pm}\!1\,0\,0}$\ \
   $\xx{~1\,2\,~1\,0\choose \mathsmaller{\pm}\!1\,0\,\mathsmaller{\pm}\!1\,0}$\ \
   $\xx{2\,~1\,0\,~1\choose 0\,\mathsmaller{\pm}\!1\,0\,\mathsmaller{\pm}\!1}$\ \
   $\xx{2\,0\,~1\,~1\choose 0\,0\,\mathsmaller{\pm}\!1\,\mathsmaller{\pm}\!1}$
  &$20+32=52$\\
 &\phantom{add} $\xx{~1\,~1\,0\,2\choose \mathsmaller{\pm}\!1\,\mathsmaller{\pm}\!1\,0\,0}$\ \
   $\xx{~1\,0\,~1\,2\choose \mathsmaller{\pm}\!1\,0\,\mathsmaller{\pm}\!1\,0}$\ \
   $\xx{0\,~1\,2\,~1\choose 0\,\mathsmaller{\pm}\!1\,0\,\mathsmaller{\pm}\!1}$\ \
   $\xx{0\,2\,~1\,~1\choose 0\,0\,\mathsmaller{\pm}\!1\,\mathsmaller{\pm}\!1}$
 &\\
\hline\hline $\aaa\join\bbb$
  &$\Delta$,\ $\xx{~1\,0\,0\,~1\choose \mathsmaller{\pm}\!1\,0\,0\,\mathsmaller{\pm}\!1}$\ \
   $\xx{0\,~1\,~1\,0\choose 0\,\mathsmaller{\pm}\!1\,\mathsmaller{\pm}\!1\,0}$\ \
   $\xx{~1\,0\,~1\,0\choose \mathsmaller{\pm}\!1\,0\,\mathsmaller{\pm}\!1\,0}$\ \
   $\xx{0\,~1\,0\,~1\choose 0\,\mathsmaller{\pm}\!1\,0\,\mathsmaller{\pm}\!1}$
  &$4+16=20$\\
\hline $\aaa\join\bbb\join\ccc$
  &add $\xx{~1\,~1\,0\,0\choose \mathsmaller{\pm}\!1\,\mathsmaller{\pm}\!1\,0\,0}$\ \
   $\xx{0\,0\,~1\,~1\choose 0\,0\,\mathsmaller{\pm}\!1\,\mathsmaller{\pm}\!1}$
  &$20+8=28$\\
\hline
\end{tabular}
\caption{The convex cones in the first column are increasing
downward with respect to inclusion. For each convex cone $\sigma$ in
the first column, we list up all the states in $\CE_\sigma$ which
arise from characteristic inequalities for the dual cone
$\sigma^\circ$. The collection $\CE_\sigma$ contain all extreme rays
of the convex cone $\sigma\cap V$ consisting of GHZ diagonal states.
We prove that all the states in $\CE_\sigma$ are indeed extreme.}
\end{center}
 \end{table}

In order to prove that all the candidates are actually extreme rays,
we use again duality in the vector space $V$. Suppose that $\sigma$
is a convex cone in {\sc Table 2} with the set $\CE_\sigma$
containing all the candidates of extreme rays of $\sigma\cap V$. We
first recall that the bi-dual face
$$
(\varrho^\prime)^\prime=\{\psi\in\sigma\cap V:
\lan W,\psi\ran=0\ {\text{\rm for every}}\ W\in  \varrho^\prime\}
$$
is the smallest exposed face of $\sigma\cap V$ containing $\rho$.
Thus, a state $\varrho\in\sigma\cap V$ is extreme if its bi-dual
face $\varrho^{\prime\prime}$ consists of nonnegative scalar
multiples of $\varrho$. Because $\CE_\sigma$ contains all extreme
rays of $\sigma\cap V$, we see that this is the case if for each
$\psi\in\CE_\sigma\setminus \{\varrho\}$, there exists a witness
$W_\psi\in V$ such that $\lan W_\psi,\varrho\ran=0$ and $\lan
W_\psi,\psi\ran >0$. By taking $W=\sum_\psi W_\psi$, this is
equivalent to the existence of a witness $W\in V$ satisfying
\begin{equation}\label{condi}
\lan W,\varrho\ran=0,\qquad \lan W,\psi\ran >0\ {\text{\rm for
each}}\ \psi\in\CE_\sigma\setminus \{\varrho\},
\end{equation}
which is a seemingly stronger condition. Note that (\ref{condi}) actually tells us that
$W$ belongs to $\sigma^\circ\cap V=(\sigma\cap V)^\circ$, since every state in $\sigma\cap V$ is a convex combination of states in $\CE_\sigma$.
Geometrically, it is clear that the condition (\ref{condi})
is equivalent to the claim that the dual face $\varrho^\prime$ is a maximal face of the dual cone $\sigma^\circ\cap V$.

To prove that all the states in $\CE_\sigma$ are extreme, we begin
with the convex cone $\sigma=\aaa\meet(\bbb\join\ccc)$. For
$\varrho=\xx{~1~0~0~0\choose ~0~0~0~0}$, one can check the witness
{$W=\xx{~0~1~1~1\choose ~0~0~0~0}$} works. For other states like
$$
\textstyle
\xx{~1~ 1~ 1~ 1\choose ~1~ 1~ 1 ~1},\quad
\xx{~1~ 0~ 1~ 1\choose ~1~ 0~ 0~ 0},\quad
\xx{~1~ 1~ 0~ 1\choose ~1~ 0~ 0~ 0},\quad
\xx{~1~ 2~ 0~ 1\choose ~1~ 0~ 0~ 1},\quad
\xx{~1~ 2~ 2~ 1\choose ~1~ 2~ 0~ 1},
$$
the following witnesses
$$
\textstyle
\xx{~1\,~1\,~1\,~1\choose \shortminus1\, \shortminus1\, \shortminus1\, \shortminus1},\quad
\xx{\,1~3~1~1\choose
\shortminus3~0~0~0},\quad
\xx{\,1~1~3~1\choose
\shortminus3~0~0~0},\quad
\xx{\,1~1~2~\,1\choose
\shortminus2~0~0~\shortminus2},\quad
\xx{~2~\,1~\,1~~2\choose
\shortminus1~\shortminus3~0~\shortminus1}
$$
satisfy the condition (\ref{condi}).
When an anti-diagonal of a candidate has minus sign, we take plus sign on the corresponding entry of the witness.
The others are similar by symmetry.

We also recall that if $C_1$ and $C_2$ are convex cones and
$x\in C_1\subset C_2$ is extreme in $C_2$ then it is also extreme in $C_1$.
Therefore, all the candidates in {\sc Table 2} are really extreme rays for the convex cones
$\aaa\meet\bbb\meet\ccc$, $\aaa\meet\bbb$ and $(\aaa\meet\bbb)\join(\aaa\meet\ccc)$.

It was actually shown in \cite{han_kye_pe} that all the candidates in {\sc Table 2} are extreme for
$\aaa\meet\bbb\meet\ccc$, $\aaa\meet\bbb$, $\aaa$, $\aaa\join \bbb$ and $\aaa\join\bbb\join\ccc$.
Therefore, it remains to consider the convex cone $(\aaa\join\bbb)\meet(\aaa\join\ccc)$.
Again, for the states
$$
\textstyle
\xx{1~0~0~1\choose 1~0~0~1},\qquad
\xx{1~1~1~0\choose 1~0~0~0},\qquad
\xx{1~1~2~0\choose 1~1~0~0},
$$
the witnesses
$$
\textstyle
\xx{\,1~1~1~\,1\choose
\shortminus1~0~0~\shortminus1}, \qquad
\xx{~1~1~1~3\choose
\shortminus3~0~0~0}, \qquad
\xx{\,1~~5~1~3\choose
\shortminus3~ \shortminus5 ~0~0}
$$
satisfy (\ref{condi}), respectively. This shows that all the candidates in {\sc Table 2} are really
extreme rays of the corresponding convex cones.

\section{Conclusion}

In this paper, we gave criteria for the convex cones
listed in (\ref{list}). In the plain terminologies, a state
$\varrho$ belongs to the convex cone
$(\aaa\meet\bbb)\join(\aaa\meet\ccc)$ if and only if it is a mixture of a simultaneously $A$-$BC$ and $B$-$CA$ bi-separable state
and a simultaneously $A$-$BC$ and $C$-$BA$ bi-separable state.
We gave a necessary condition for a three qubit state $\varrho$
to have this property in terms of diagonal and anti-diagonal entries of $\varrho$, and showed
that this condition is also sufficient when $\varrho$ is $\xx$-shaped.
We also found all the GHZ diagonal states which distinguish kinds of partial separability.
For example, we can read out from {\sc Table 2} all the extremal GHZ diagonal states which violate distributive rules.


It was asked in \cite{han_kye_szalay} whether the lattice ${\mathcal L}$ is free or not.
The identities in Corollary \ref{identity_P} actually shows that the lattice ${\mathcal L}_\xx$ is not free, because
we have exhibited a lattice generated by three elements which give rise to a strict inequality in (\ref{aaaaa}).
It is natural to ask whether the identity holds in (\ref{aaaaa}) when $(x,y,z)=(\aaa,\bbb,\ccc)$.
This is to equivalent to ask whether the following properties
\begin{itemize}
\item
$\varrho$ is $A$-$BC$ separable;
\item
$\varrho$ is a mixture of $B$-$CA$ separable state and a simultaneously $C$-$AB$ and $A$-$BC$ separable state;
\item
$\varrho$ is a mixture of $C$-$BA$ separable state and a simultaneously $B$-$AC$ and $A$-$BC$ separable state,
\end{itemize}
for a three qubit state $\varrho$ implies that $\varrho$ is
a mixture of a simultaneously $A$-$BC$ and $B$-$CA$ bi-separable state
and a simultaneously $A$-$BC$ and $C$-$BA$ bi-separable state. We have seen in this paper that this is the case when
$\varrho$ is $\xx$-shaped. This question must be related with the question
whether the lattice ${\mathcal L}$ is free or not.


\end{document}